\documentclass[12pt]{article}
\usepackage{mathrsfs}
\usepackage{graphicx}
\usepackage{enumerate}
\makeatother
\usepackage{amsfonts,amsmath,amssymb}
\usepackage{epstopdf}
\usepackage{caption}
\usepackage{subcaption}
\usepackage{amsmath}

\setlength\topmargin{-.3in}
\setlength\oddsidemargin{.3in}
\setlength\textwidth{6.3in}
\setlength\textheight{8.75in}
\title{}\date{}
\title{Existence of Coupled Optical Vortex Solitons Propagating in a Quadratic Nonlinear Medium}
\author{Luciano Medina
\footnote{lmedina@nyu.edu}\\Shokan, New York 12481, USA}

\def\XXint#1#2#3{{\setbox0=\hbox{$#1{#2#3}{\int}$}
 \vcenter{\hbox{$#2#3$}}\kern-.5\wd0}}

\newtheorem{oldtheorem}{Theorem}[section]
\newtheorem{oldassertion}[oldtheorem]{Assertion}
\newtheorem{oldproposition}[oldtheorem]{Proposition}
\newtheorem{oldremark}[oldtheorem]{Remark}
\newtheorem{oldlemma}[oldtheorem]{Lemma}
\newtheorem{olddefinition}[oldtheorem]{Definition}
\newtheorem{oldclaim}[oldtheorem]{Claim}
\newtheorem{oldcorollary}[oldtheorem]{Corollary}

\newenvironment{theorem}{\begin{oldtheorem}$\!\!\!${\bf.}}{\end{oldtheorem}}

\newenvironment{lemma}{\begin{oldlemma}$\!\!\!${\bf.}}{\end{oldlemma}}

\newbox\qedbox

\newenvironment{proof}{\smallskip\noindent{\bf Proof.}\hskip \labelsep}%
                        {\hfill\penalty10000\copy\qedbox\par\medskip}

\newenvironment{acknowledgements}{\smallskip\noindent{\bf Acknowledgements.}
        \hskip\labelsep}{}

\setlength{\baselineskip}{1.2\baselineskip}
\providecommand{\keywords}[1]{\textbf{\textbf{Keywords---}} #1}
\setlength{\baselineskip}{1.2\baselineskip}

\begin{document}
\maketitle
\begin{abstract}
We consider the coupled propagation of an optical field and its second harmonic in a quadratic nonlinear medium governed by a coupled system of Schrodinger equations. We prove the existence of ring-profiled optical vortex solitons appearing as solutions to a constrained minimization problem and as solutions to a min-max problem. In the case of the constrained minimization problem solutions are shown to be positive with undetermined wave propagation constants, but in the min-max approach the wave propagation constants can be prescribed. The quadratic nonlinearity introduces some interesting properties not commonly observed in other coupled systems in the context of nonlinear optics, such as the system not accepting any semi-trivial solutions, meaning, that optical solitons cannot be observed when, say, one of the beams are off. Additionally, the second harmonic always remains positive. 
\end{abstract}

{\bf 2010 Mathematics Subject Classification.} 35J20, 35J50, 35Q55, 35Q60.

\medskip
\keywords{optical vortices, Schrodinger equations, constrained minimization, Palais-Smale condition, mountain-pass theorem.}

\section{Introduction}
Nonlinear optics, which studies the effects of solitary waves due to optical propagation in a nonlinear medium, is a very active area of investigation in both theoretical and experimental research. Its applications are abound in nonlinear science and can be found in quantum information processing, wireless communications, condensed matter physics, particle interactions, rogue waves formation, and cosmology 
\cite{ABSW, A, BKKH,BSV,KMT,RSS,Scott, SO,SL, TT, YW}.

Our interest is in the work of Skryabin and Firth \cite{SF1, SF2}, which presents the dynamics and stability of ring-profiled solitary waves propagating in a self-focusing saturable non-linear medium and in a quadratic non-linear medium. In \cite{LM,LM5} we established an existence theory in two cases where the non-linearity is of the saturable type and governed by a single nonlinear Schrodinger equation. Similar analysis for other types of non-linearities have also been considered, such as, cubic \cite{YR,GuoCaoLi} and cubic-quintic \cite{Greco}. In this study, our interest is the coupled propagation of the wave and its second harmonic governed by a coupled nonlinear system of Schrodinger equations with a quadratic non-linearity. Such coupled systems form so called soliton-induced waveguides \cite{DBF, DKT, KA, Sk, LY, LX, SK,SS} and have been studied by many mathematical analysts \cite{AC, AC2, Boyan, MMP,RuifengNan}.  In the case of two mutually incoherent optical beams propagating in a self-focusing nonlinear saturable medium \cite{SK}, optical vortex solitons exist even if one beam is off, which translates to the existence of semi-trivial solutions in the mathematical context. However, we show that the coupled system with quadratic nonlinearity does not allow any semi-trivial solutions and establish an existence theory for this system.

To this end, consider the field envelopes $E_1$ and $E_2$ of an optical field and its second harmonic described by the dimensionless coupled nonlinear Schrodinger equations \cite{SF1,SF2}:
\begin{align}
    i\partial_zE_1+\dfrac{1}{2}\vec{\nabla}^2_{\perp}E_1+E^*_1E_2&=0\label{Schrodinger1}\\
    i\partial_zE_2+\dfrac{1}{4}\vec{\nabla}^2_{\perp}E_2+\dfrac{1}{2}E^2_1&=\beta E_2,\label{Schrodinger2}
\end{align}
where $\beta$ is the phase mismatch parameter. The wave propagation is in the longitudinal $z$-direction over the transverse plane of coordinates $(x,y)$ perpendicular to the $z$-axis. $\vec{\nabla}_{\perp}^2$ is the Laplace operator over the transverse plane of coordinates. Optical vortex solitons are localized solutions of \eqref{Schrodinger1}-\eqref{Schrodinger2}, which do not change their intensity profile during propagation and have a phase singularity at its center. They are described under the ansatz
\begin{align}
    E_m=A_m(r)e^{im(l\theta+\kappa z)},\quad m =1,2,\label{ansatz}
\end{align}
where $A_1$ and $A_2$ are real-valued functions representing the field amplitudes, i.e., $|E_m|=A_m^2$, $m=1,2$, respectively. $l$ is the azimuthal model index or a free parameter, which we will call the vortex number and is restricted to take on integer values to ensure azimuthal periodicity. $\kappa$ is the wave propagation constant. $r=\sqrt{x^2+y^2}$ and $\theta$ is the polar angle. Under the ansatz \eqref{ansatz}, the coupled nonlinear Schrodinger equations reduce to the so called $l$-vortex system, 
\begin{align}
&A_{1,rr}+\dfrac{1}{r}A_{1,r}-\dfrac{l^2}{r^2}A_1=2(\kappa-A_2)A_1,\label{sysEq1}\\
&A_{2,rr}+\dfrac{1}{r}A_{2,r}-\dfrac{4l^2}{r^2}A_2=4(2\kappa+\beta)A_2-2A_1^2,\label{sysEq2}
\end{align}
and considered under the boundary conditions
\begin{align}
    A_1(0)=0=A_1(R),\qquad A_2(0)=0=A_2(R), \label{bdyCond}
\end{align}
where the first boundary condition, $A_m(0) = 0$, $m=1,2$, is due to the presence of the vortex core or, equivalently, the regularity of $A_m$, $m=1,2$, at $r = 0$, and the second boundary condition, $A_m(R) = 0$, $m=1,2,$ for $R > 0$ sufficiently large, represents the distance from the vortex core and
may be imposed due to beam confinement. 

An important parameter characterization of spatial vortex solitons is defined by the integral
\begin{align}
    Q(A_m)=\int_0^{2\pi}\int_0^R|E_m|rdrd\theta=2\pi\int_0^R A_m^2 rdr,\quad m=1,2,\label{energyFlux}
\end{align}
where we refer to $Q$ as the energy flux and $Q(A_1)+2Q(A_2)$ is the total energy flux of the system. We define the energy functional 
\begin{align}
    \mathcal{E}(A_1,A_2)=\int_0^R\left(A_{1,r}^2+A_{2,r}^2+\dfrac{A_1^2}{r^2}+\dfrac{A_2^2}{r^2}+A_1^2A_2\right)rdr,
\end{align}
and refer to solutions as being of finite energy whenever $\mathcal{E}(A_1,A_2)<\infty$. 

Our specific interest is in establishing the existence of finite energy fully non-trivial exponentially decaying solution pairs, $(A_1,A_2)$, of the system \eqref{sysEq1}-\eqref{bdyCond}. By fully non-trivial solutions, we mean solutions that are not the trivial solution, $(0,0)$, and not semi-trivial, meaning, $(A_1,A_2)$, where neither $A_1\equiv 0$ or $A_2\equiv 0$. To this end, we summarize our main results with the following theorems, 
\begin{oldtheorem}
    Let the azimuthal model index, $l$, be any nonzero integer and consider the boundary value problem \eqref{sysEq1}-\eqref{bdyCond} governing the amplitude of the optical field and its second harmonic propagating along the longitudinal $z$-direction with a propagation constant $\kappa$ and a phase mismatch constant $\beta$. For $0<Q(A_1):=Q_1<2\pi |l|$ and any $Q(A_2):=Q_2>0$, there exists a solution pair $(A_1,A_2)$ with $A_m(r)>0$, $r\in(0,R)$, $m=1,2$, and $\kappa,\beta\in\mathbb{R}$ arising as Lagrange multipliers of a constrained optimization problem. 
\end{oldtheorem}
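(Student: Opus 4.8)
The plan is to obtain the pair $(A_1,A_2)$ as a minimizer of a constrained variational problem whose Euler--Lagrange system is exactly \eqref{sysEq1}--\eqref{sysEq2}, with $\kappa$ and $\beta$ emerging as the Lagrange multipliers attached to the two energy-flux constraints. I would work in the weighted Sobolev spaces $X_k$ (for $k=l$ and $k=2l$), the completion of $C_0^\infty((0,R))$ under $\|A\|_{X_k}^2=\int_0^R(A_r^2+k^2r^{-2}A^2)\,r\,dr$, which encode \eqref{bdyCond} together with vortex regularity at $r=0$. On the admissible class
\[
\mathcal A=\Bigl\{(A_1,A_2)\in X_l\times X_{2l}\ :\ 2\pi\!\int_0^R\! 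A_1^2\,r\,dr=Q_1,\ \ 2\pi\!\int_0^R\! A_2^2\,r\,dr=Q_2\Bigr\}
\]
consider the functional
\[
J(A_1,A_2)=\int_0^R\Bigl(A_{1,r}^2+l^2r^{-2}A_1^2+\tfrac12A_{2,r}^2+2l^2r^{-2}A_2^2-2A_1^2A_2\Bigr)r\,dr .
\]
A first-variation computation shows that a critical point of $J$ on $\mathcal A$, with multipliers $\mu_1,\mu_2$ for the two constraints, solves \eqref{sysEq1}--\eqref{sysEq2} with $\kappa,\beta\in\mathbb R$ affine in $\mu_1,\mu_2$; so it suffices to prove that $c_0:=\inf_{\mathcal A}J$ is finite and attained.

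The analytic core is a vortex-weighted one-dimensional Sobolev estimate: for $A\in X_k$, writing $A(r)^2=-2\int_r^R AA_s\,ds$ and applying Cauchy--Schwarz and the arithmetic--geometric inequality gives $\|A\|_{L^\infty(0,R)}^2\le |k|^{-1}\|A\|_{X_k}^2$, hence $\int_0^R A^4\,r\,dr\le |k|^{-1}\|A\|_{X_k}^2\int_0^R A^2\,r\,dr$. Applied to $A_1$ this reads $\int_0^R A_1^4\,r\,dr\le \tfrac{Q_1}{2\pi|l|}\|A_1\|_{X_l}^2$, and this is precisely where the hypothesis $Q_1<2\pi|l|$ is used: the factor $Q_1/(2\pi|l|)$ is then strictly less than $1$, so the indefinite coupling term is strictly subordinate to the Dirichlet energy. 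Combining this with $2\bigl|\int_0^R A_1^2A_2\,r\,dr\bigr|\le 2\bigl(\int_0^R A_1^4\,r\,dr\bigr)^{1/2}(Q_2/2\pi)^{1/2}$ and Young's inequality yields a coercive bound $J(A_1,A_2)\ge \tfrac12\|A_1\|_{X_l}^2+\tfrac12\|A_2\|_{X_{2l}}^2-C(Q_1,Q_2)$, so $c_0>-\infty$ and every minimizing sequence is bounded in $X_l\times X_{2l}$.

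To produce a minimizer, I would pass along a minimizing sequence to a weak limit $(\bar A_1,\bar A_2)$ in $X_l\times X_{2l}$; the compact embedding $X_k\hookrightarrow L^p((0,R),r\,dr)$ for $2\le p<\infty$ (a Rellich-type fact on the bounded interval, using the pointwise bound above and that $\int_0^\varepsilon A^2 r\,dr\le |k|^{-2}\varepsilon^2\|A\|_{X_k}^2$ near the degenerate endpoint) upgrades this to strong convergence in $L^2$ and $L^4$, so the two flux constraints pass to the limit and the coupling term converges, while the quadratic part of $J$ is weakly lower semicontinuous; hence $(\bar A_1,\bar A_2)\in\mathcal A$ and attains $c_0$. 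I expect the main obstacle to be the sign-indefinite coupling term $-2\int_0^R A_1^2A_2\,r\,dr$: it is what makes coercivity delicate (resolved by $Q_1<2\pi|l|$), and one must also rule out that a minimizing sequence concentrates near the points where the weight degenerates; producing a suitable scaled test pair with $\int_0^R A_1^2A_2\,r\,dr>0$ to see $c_0<0$ together with the compactness above shows the limit is genuinely non-trivial in both components, so no semi-trivial pair arises.

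For positivity, note $\|\,|A|\,\|_{X_k}=\|A\|_{X_k}$ and $\int_0^R A_1^2|A_2|\,r\,dr\ge\int_0^R A_1^2A_2\,r\,dr$, so replacing $(A_1,A_2)$ by $(|A_1|,|A_2|)$ preserves membership in $\mathcal A$ and does not increase $J$; the minimizer may thus be taken non-negative. Standard elliptic/ODE regularity makes $A_1,A_2$ classical and non-negative on $(0,R)$; since each $A_m\ge0$ solves a linear equation $A_{m,rr}+r^{-1}A_{m,r}-V_m(r)A_m=0$ with $V_m$ locally bounded (using $A_2\in L^\infty$), uniqueness for this ODE forces $A_m>0$ on $(0,R)$ unless $A_m\equiv0$, which the flux constraints preclude. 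Finally $\mu_1,\mu_2\in\mathbb R$ as multipliers of real constraints, so $\kappa,\beta\in\mathbb R$, which completes the argument.
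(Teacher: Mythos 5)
Your strategy is essentially the paper's: minimize the same action (yours is twice the paper's $I$) subject to the two flux constraints, get coercivity from a weighted estimate on $\int_0^R rA_1^4\,dr$, extract a bounded minimizing sequence, pass the constraints and the cubic coupling to the limit by compactness, replace the minimizer by $(|A_1|,|A_2|)$ to get non-negativity, and read $\kappa,\beta$ off the Lagrange multipliers. One technical difference is worth noting: because you bound the coupling by Cauchy--Schwarz as $2\bigl(\int_0^R rA_1^4dr\bigr)^{1/2}(Q_2/2\pi)^{1/2}\le C(Q_1,Q_2,l)\,\|A_1\|_{X_l}$, i.e.\ linearly in $\|A_1\|_{X_l}$, your argument is coercive for \emph{every} $Q_1>0$; the hypothesis $Q_1<2\pi|l|$ is not actually where your estimate bites (it is genuinely needed only on the paper's route, where $2A_1^2A_2\le A_1^4+A_2^2$ fixes the coefficient of $\int rA_1^4dr$ at one and it must then be absorbed into the quadratic part). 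This is harmless, as is your detour through $c_0<0$: full non-triviality already follows from the two constraints passing to the limit.

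The one step that fails as written is the positivity of $A_2$. You assert that each $A_m\ge0$ solves a homogeneous linear equation $A_{m,rr}+r^{-1}A_{m,r}-V_m(r)A_m=0$ with locally bounded $V_m$ and invoke ODE uniqueness. That is correct for $A_1$ (equation \eqref{sysEq1} is linear in $A_1$ with potential $l^2/r^2+2(\kappa-A_2)$, bounded on compact subsets of $(0,R)$), but false for $A_2$: equation \eqref{sysEq2} carries the inhomogeneous source $-2A_1^2$, which cannot be written as $V_2(r)A_2$ with bounded $V_2$ near a zero of $A_2$, and vanishing Cauchy data for $A_2$ alone at an interior point does not force $A_2\equiv0$ for an inhomogeneous equation. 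The conclusion is easy to rescue precisely because the source has a sign: if $A_2\ge0$ vanished at some interior $r_0$, then $A_{2,r}(r_0)=0$ and \eqref{sysEq2} gives $A_{2,rr}(r_0)=-2A_1^2(r_0)\le0$, while the interior minimum forces $A_{2,rr}(r_0)\ge0$; hence $A_1(r_0)=0$, and your valid uniqueness argument for $A_1$ then gives $A_1\equiv0$, contradicting $Q_1>0$. Alternatively, from $\Delta A_2-V_2^{+}A_2\le0$ with $V_2=4l^2/r^2+4(2\kappa+\beta)$ the strong maximum principle yields $A_2\equiv 0$, contradicting $Q_2>0$. In either repair, keep in mind that at this stage $\kappa,\beta$ are undetermined multipliers, so you may not assume $2\kappa+\beta\ge 0$ as in the paper's Lemma 3.1; the sign-insensitive argument above is what is needed.
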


\begin{theorem}
    Let $\kappa>\max\{0,-\beta/2\}$ and $(A_1,A_2)$ be a nontrivial solution pair to  \eqref{sysEq1}-\eqref{bdyCond}.
    \begin{enumerate}[(i)]
    \item The second harmonic $A_2$ is always positive, i.e., $A_2(r)>0$ for all $r\in(0,R)$.
    \item There exist no semi-trivial solutions.
    \item Let $M_m:=\max\limits_{r\in(0,R)}|A_m(r)|$, $m=1,2$. The global maximum of the second harmonic satisfies
    \begin{align}
        \dfrac{l^2}{2R^2}+\kappa< M_2<\dfrac{M_1^2}{\dfrac{2l^2}{R^2}+2(2\kappa+\beta)}.
    \end{align}
    \item There holds the exponential decay estimate
    \begin{align}
A^2_1(r)\leq C\exp(-\sqrt{2\kappa}r)\quad \text{and}\quad A^2_2(r)\leq C\exp(-\sqrt{2\kappa+\beta}r),
\end{align}
for $r$ sufficiently large and $C>0$ a constant dependent on $\kappa$ and $\beta$ only.
    \end{enumerate}
\end{theorem}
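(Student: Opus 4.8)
The plan is to derive (i)--(iii) from maximum-principle arguments applied to the two amplitude equations in turn, and to obtain (iv) by a comparison (barrier) construction. For (i), rewrite \eqref{sysEq2} as $A_{2,rr}+\frac1r A_{2,r}-q(r)A_2=-2A_1^2\le 0$ with $q(r):=\frac{4l^2}{r^2}+4(2\kappa+\beta)$, which is strictly positive since $l\ne 0$ and $\kappa>-\beta/2$. Thus $A_2$ is a supersolution of a second-order operator with positive zeroth-order coefficient: an interior negative minimum $r_0$ would force $A_{2,r}(r_0)=0$, $A_{2,rr}(r_0)\ge 0$ and $-q(r_0)A_2(r_0)>0$, contradicting the inequality, so the weak maximum principle gives $A_2\ge 0$ on $[0,R]$; since any nontrivial pair has $A_2\not\equiv 0$ (if $A_2\equiv 0$ then \eqref{sysEq2} immediately forces $A_1\equiv 0$), the strong maximum principle upgrades this to $A_2>0$ on $(0,R)$. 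For (ii), the case $A_2\equiv 0$ is impossible by the previous remark, so it remains to exclude $A_1\equiv 0$ with $A_2\not\equiv 0$; multiplying \eqref{sysEq2} by $rA_2$, integrating over $(0,R)$ and integrating by parts using \eqref{bdyCond} gives $-\int_0^R\!\big(rA_{2,r}^2+\frac{4l^2}{r}A_2^2\big)\,dr=4(2\kappa+\beta)\int_0^R A_2^2\,r\,dr$, whose left side is $<0$ and right side $>0$ because $2\kappa+\beta>0$ — a contradiction.

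For (iii), by (i) the maximum $M_2$ is attained at an interior $r_2\in(0,R)$ with $A_{2,r}(r_2)=0$ and $A_{2,rr}(r_2)\le 0$; evaluating \eqref{sysEq2} there yields $\big(\frac{4l^2}{r_2^2}+4(2\kappa+\beta)\big)M_2=A_{2,rr}(r_2)+2A_1^2(r_2)\le 2M_1^2$, and since $r_2<R$ this sharpens to the stated upper bound. For the lower bound, $A_1\not\equiv 0$ by (ii), so $|A_1|$ attains $M_1>0$ at an interior point $r_1$, where $A_{1,r}(r_1)=0$ and, $A_1^2$ having a local maximum there, $A_1(r_1)A_{1,rr}(r_1)=\tfrac12(A_1^2)''(r_1)\le 0$; multiplying \eqref{sysEq1} by $A_1$ and evaluating at $r_1$ gives $\big(\frac{l^2}{r_1^2}+2\kappa-2A_2(r_1)\big)M_1^2\le 0$, hence $M_2\ge A_2(r_1)\ge\frac{l^2}{2r_1^2}+\kappa>\frac{l^2}{2R^2}+\kappa$.

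For (iv) I would argue by comparison. A finite-energy solution satisfies $A_1(r),A_2(r)\to 0$ as $r\to\infty$ (equivalently, for $R$ large the pair is uniformly small beyond a fixed radius), so there is $r_0$ with $A_2<\kappa/2$ on $[r_0,\infty)$. Setting $\psi:=A_1^2$ and using \eqref{sysEq1}, one gets $\psi_{rr}+\frac1r\psi_r=2A_{1,r}^2+\frac{2l^2}{r^2}\psi+4(\kappa-A_2)\psi\ge 2\kappa\psi$ for $r\ge r_0$, so $\psi$ is a subsolution of $v_{rr}+\frac1r v_r-2\kappa v=0$, whereas $w(r):=Ce^{-\sqrt{2\kappa}\,r}$ satisfies $w_{rr}+\frac1r w_r-2\kappa w=-\frac{\sqrt{2\kappa}}{r}\,w\le 0$ and is a supersolution; choosing $C$ so that $w(r_0)\ge A_1^2(r_0)$ and applying the comparison principle on $[r_0,\infty)$ (both functions $\to 0$) gives $A_1^2(r)\le Ce^{-\sqrt{2\kappa}\,r}$. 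Feeding this into \eqref{sysEq2} makes $A_2$ a subsolution of $v\mapsto v_{rr}+\frac1r v_r-\big(\frac{4l^2}{r^2}+4(2\kappa+\beta)\big)v$ with forcing bounded by a multiple of $e^{-\sqrt{2\kappa}\,r}$; a barrier of the form $a\,e^{-\sqrt{2\kappa+\beta}\,r}+b\,e^{-\sqrt{2\kappa}\,r}$ — the first term a supersolution of the homogeneous operator, the second calibrated against the forcing — together with the comparison principle then yields $A_2^2(r)\le Ce^{-\sqrt{2\kappa+\beta}\,r}$ for $r$ large.

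The routine pieces are the integration by parts in (ii) and the super/subsolution checks for the exponentials in (iv). The step I expect to be the real obstacle is the coupling bookkeeping in (iv): controlling the cross term ($A_2A_1$ in \eqref{sysEq1}, $A_1^2$ in \eqref{sysEq2}) so that each amplitude is dominated by a single clean exponential, and justifying the comparison principle on the semi-infinite (or large bounded) interval — which is precisely where the decay $A_1,A_2\to 0$, and hence the smallness $A_2<\kappa/2$ for $r\ge r_0$, are used.
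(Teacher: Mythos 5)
Your parts (i) and (iii) follow essentially the paper's own route (pointwise evaluation of \eqref{sysEq1}--\eqref{sysEq2} at interior extrema), with the welcome extra care of invoking the strong maximum principle to upgrade $A_2\ge 0$ to $A_2>0$, which the paper glosses over. In (ii) you replace the paper's pointwise argument for the case $A_1\equiv 0$ (there: reflect $A_2\mapsto -A_2$ and evaluate at a positive maximum) by the integral identity obtained from multiplying \eqref{sysEq2} by $rA_2$; this is a clean alternative, with the only small point to record being that the boundary term $rA_2A_{2,r}$ vanishes at $r=0$ (finite energy suffices, along a sequence $r_j\to 0$). For the $A_1$ half of (iv) your subsolution/barrier comparison is exactly the paper's argument.

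The place where your proposal has a genuine gap is the $A_2$ estimate in (iv) — precisely the step you flagged. First, a framing point: the problem is posed on $[0,R]$ with $A_m(R)=0$, so the smallness $A_2<\kappa/2$ on $[r_0,R]$ comes from continuity and the boundary condition (this is what the paper uses), not from decay as $r\to\infty$; in particular $r_0$, and hence the constant, depends on the solution. More substantively, your two-term barrier $a\,e^{-\sqrt{2\kappa+\beta}\,r}+b\,e^{-\sqrt{2\kappa}\,r}$ does not do what is claimed for all admissible $(\kappa,\beta)$: the calibrated term $b\,e^{-\sqrt{2\kappa}\,r}$ is a supersolution against the forcing only when $2\kappa<4(2\kappa+\beta)$, i.e. $\beta>-\tfrac{3}{2}\kappa$, so it fails for $-2\kappa<\beta\le-\tfrac{3}{2}\kappa$ (there the forcing decays faster than the homogeneous rate and a single exponent below $2\sqrt{2\kappa+\beta}$ should be used instead); and even when the barrier works, it bounds $A_2$, so squaring yields $A_2^2\lesssim e^{-2\sqrt{2\kappa}\,r}$ when $\beta\ge 0$, which implies the stated rate $e^{-\sqrt{2\kappa+\beta}\,r}$ only if $2\sqrt{2\kappa}\ge\sqrt{2\kappa+\beta}$, i.e. $\beta\le 6\kappa$. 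So as written your argument does not reach the stated $A_2$ bound for $\beta>6\kappa$, and the barrier itself needs repair for $\beta$ near $-2\kappa$. To be fair, the paper's own proof is weaker still at this point — it simply says ``in a similar manner,'' takes $\sigma_2^2=4(2\kappa+\beta)$, and silently discards the coupling term $-2A_1^2$ — so your attempt identifies the right difficulty, but the bookkeeping of exponents must be fixed before the claimed decay rate for $A_2$ is actually established.
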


\begin{theorem}
    For any $\kappa>\max\{0,-\beta/2\}$, $|l|\geq 1$, and $R>0$, the coupled system \eqref{sysEq1}-\eqref{sysEq2} satisfying the boundary conditions \eqref{bdyCond} has a fully non-trivial solution defined over $[0,R]$. Moreover, such solution is a saddle point of an indefinite action functional and appears as a result of a min-max approach. 
\end{theorem}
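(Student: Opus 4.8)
The idea is to produce a fully non‑trivial solution as a mountain‑pass critical point of the action functional attached to \eqref{sysEq1}--\eqref{sysEq2}. One works in the Hilbert space $X=H\times H$, where $H$ is the completion of $C_c^\infty\bigl((0,R)\bigr)$ in the vortex‑weighted norm $\|A\|_H^2=\int_0^R\bigl(A_r^2+r^{-2}A^2\bigr)r\,dr$ (the space already used for the constrained problem; for $l\ne0$ it is equivalent to the one controlling the full transverse energy and it encodes both conditions in \eqref{bdyCond}). Put
\begin{align}
I(A_1,A_2)=\int_0^R\Bigl(\tfrac12A_{1,r}^2+\tfrac{l^2}{2r^2}A_1^2+\kappa A_1^2+\tfrac14A_{2,r}^2+\tfrac{l^2}{r^2}A_2^2+(2\kappa+\beta)A_2^2-A_1^2A_2\Bigr)r\,dr,\nonumber
\end{align}
and write $I=\tfrac12\mathcal Q-\mathcal C$, with $\mathcal Q$ the quadratic form gathering the gradient, centrifugal and mass terms and $\mathcal C(A_1,A_2)=\int_0^R A_1^2A_2\,r\,dr$ the cubic coupling. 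A routine computation gives $I\in C^1(X,\mathbb R)$ and shows that $I'(A_1,A_2)=0$ is exactly the weak form of \eqref{sysEq1}--\eqref{sysEq2}: $\partial_{A_1}I$ reproduces \eqref{sysEq1} and $\tfrac12\,\partial_{A_2}I$ reproduces \eqref{sysEq2}. Since $\mathcal C$ changes sign, $I$ is unbounded below and genuinely indefinite, which forces a min‑max rather than a minimization scheme.

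\noindent\textbf{Mountain‑pass geometry.}
As $\kappa>\max\{0,-\beta/2\}$ and $l\ne0$, every coefficient in $\mathcal Q$ is positive, so $\mathcal Q(u)$ is equivalent to $\|u\|_X^2$; combined with the embedding bound $|\mathcal C(u)|\le C\|u\|_X^3$ this yields $I(u)\ge\tfrac12 c\|u\|_X^2-C\|u\|_X^3$, whence $I(0,0)=0$ and $I\ge\alpha>0$ on a small sphere $\{\|u\|_X=\rho\}$. On the other hand, for fixed $\phi\in H\setminus\{0\}$ and $0\le\psi\in H\setminus\{0\}$ one has $I(t\phi,t\psi)=a t^2-b t^3$ with $a,b>0$, so $I(t\phi,t\psi)\to-\infty$ and there is $e=(t_0\phi,t_0\psi)$ with $\|e\|_X>\rho$ and $I(e)<0$. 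Hence $I$ has the mountain‑pass geometry, and $c^\ast:=\inf_{\gamma\in\Gamma}\max_{s\in[0,1]}I(\gamma(s))\ge\alpha>0$ over the paths $\gamma$ in $X$ joining $(0,0)$ to $e$.

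\noindent\textbf{Palais--Smale condition.}
Let $(u_n)\subset X$ satisfy $I(u_n)\to c^\ast$ and $I'(u_n)\to0$. Because $\mathcal C$ is $3$‑homogeneous, $I(u_n)-\tfrac13\langle I'(u_n),u_n\rangle=\tfrac16\mathcal Q(u_n)$; the left side is $\le c^\ast+1+o(1)\|u_n\|_X$, and since $\mathcal Q(u_n)\ge c\|u_n\|_X^2$ the sequence $(u_n)$ is bounded, so $u_n\rightharpoonup u$ along a subsequence. The embedding $H\hookrightarrow L^p\bigl((0,R),r\,dr\bigr)$ is compact for every $p<\infty$ --- an elementary consequence of the pointwise bound $|A(r)|\le C\|A\|_H\sqrt{\log(R/r)}$ near $r=0$ together with ordinary Rellich compactness on $[\delta,R]$ and a diagonal argument --- so $\mathcal C'$ is a compact operator and $\langle\mathcal C'(u_n),u_n-u\rangle\to0$. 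Testing $I'(u_n)$ against $u_n-u$ and using the equivalence of $\mathcal Q$ with $\|\cdot\|_X^2$ then gives $\mathcal Q(u_n-u)\to0$, i.e. $u_n\to u$ in $X$. Thus $I$ satisfies $(PS)_{c^\ast}$.

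\noindent\textbf{Conclusion.}
The mountain‑pass theorem then yields $u^\ast=(A_1^\ast,A_2^\ast)\in X$ with $I(u^\ast)=c^\ast>0=I(0,0)$ and $I'(u^\ast)=0$, so $u^\ast\ne(0,0)$ is a nontrivial weak solution of \eqref{sysEq1}--\eqref{bdyCond}. Standard ODE regularity for \eqref{sysEq1}--\eqref{sysEq2} upgrades it to a classical solution on $(0,R)$, continuous on $[0,R]$ with $A_m^\ast\sim c_m r^{m|l|}$ as $r\to0^+$, so it is defined on all of $[0,R]$ and realizes the boundary conditions. Finally, since $\kappa>\max\{0,-\beta/2\}$, the non‑existence of semi‑trivial solutions (part~(ii) of the preceding theorem) forces $u^\ast$ to be fully non‑trivial; being a mountain‑pass point it is a saddle of the indefinite functional $I$, as asserted. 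I expect the decisive obstacle to be the Palais--Smale verification: one must pin down the compactness of the vortex‑weighted Sobolev embedding, and, because the cubic coupling is sign‑indefinite, the Ambrosetti--Rabinowitz growth condition is not available globally, so boundedness of $(PS)$ sequences has to be extracted directly from the algebraic identity $I-\tfrac13\langle I',\cdot\rangle=\tfrac16\mathcal Q$.
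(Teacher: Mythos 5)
Your proposal is correct and follows essentially the same route as the paper: the same vortex-weighted Hilbert space, the same indefinite action functional whose critical points are weak solutions of \eqref{sysEq1}--\eqref{sysEq2}, mountain-pass geometry on a small sphere plus an unbounded-below direction, boundedness of Palais--Smale sequences via the identity $J-\tfrac13\langle J',\cdot\rangle=\tfrac16\mathcal Q$ together with the compact embedding into $L^p$, and finally the no-semi-trivial-solutions lemma to upgrade nontriviality to full nontriviality. The only cosmetic difference is that you exhibit the negative direction by scaling a generic pair $t(\phi,\psi)$ with $\psi\ge0$, whereas the paper computes $J$ explicitly on a piecewise-linear test function; both serve the same purpose.
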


The remainder of this paper is as follows. In section 2 we establish Theorem 1 via a constrained optimization approach. Then various results concerning the solutions of the coupled system, such as, the non-existence of semi-trivial solutions, the second harmonic being positive and the exponential decay of the solutions are established in Section 3. In section 4, we conclude with the prove of Theorem 1.3.

\section{Existence of Vortices via a Constrained Minimization Problem}
\setcounter{equation}{0}
Here we prove that the boundary value problem \eqref{sysEq1}-\eqref{bdyCond} may be solved via a constrained optimization problem. To achieve this, first consider the action functional 
\begin{equation}
    I(A_1,A_2)=\dfrac{1}{2}\int_0^R\left(A_{1,r}^2+\dfrac{1}{2}A_{2,r}^2+\dfrac{l^2}{r^2}A_1^2+\dfrac{2l^2}{r^2}A_2^2-2A_1^2A_2\right)rdr,\label{Func1}
\end{equation}
and energy flux constraint functionals 
\begin{equation}
    Q(A_1)=2\pi\int_0^RA_1^2rdr=Q_1>0\quad Q(A_2)=2\pi\int_0^RA_2^2rdr=Q_2>0.\label{constFunc}
\end{equation}
It then suffices to prove the existence of a solution to the constrained optimization problem, 
\begin{equation}
    \text{min}\{I(A_1,A_2):(A_1,A_2)\in\mathcal{C},\quad Q(A_1)=Q_1>0, \quad Q(A_2)=Q_2>0\},\label{OptProb}
\end{equation}
defined over the admissible class 
\begin{equation}
    \mathcal{C}=\left\{A_1,A_2\text{ are absolutely continuous on $[0,R]$, satisfy \eqref{bdyCond}, $\mathcal{E}(A_1,A_2)<\infty$}\right\}.
\end{equation}

Applying the basic inequality $2ab\leq a^2+b^2$, for all $a,b\in\mathbb{R}$, to \eqref{Func1} and using the constraint, $Q(A_2)=Q_2>0$, we have
\begin{align}
    I(A_1,A_2)&\geq\dfrac{1}{2}\int\left(A_{1,r}^2+\dfrac{1}{2}A_{2,r}^2+\dfrac{l^2}{r^2}A_1^2+\dfrac{2l^2}{r^2}A_2^2-A_1^4\right)rdr-\dfrac{Q_2}{4\pi}.\label{lowerBnd_1}
\end{align}
For any function satisfying $A(0)=0$, the Cauchy-Schwartz inequality gives
\begin{equation}
    A^2(r)=\int_0^r2A(\rho)A_r(\rho)d\rho\leq 2\left(\int_0^r\rho A_{\rho}^2(\rho)d\rho\right)^{1/2}\left(\int_0^r\dfrac{1}{\rho} A^2(\rho)d\rho\right)^{1/2}.
\end{equation}
Multiplying by $rA^2$ and integrating, we arrive at 
\begin{equation}
    \int_0^RrA^4dr\leq 2\left(\int_0^RrA^2dr\right)\left(\int_0^R rA_r^2dr\right)^{1/2}\left(\int_0^R\dfrac{1}{r} A^2dr\right)^{1/2}.\label{bndOnA^4}
\end{equation}
Using Cauchy's inequality with $\epsilon$ \cite{Evans}, the constraint $Q(A_1)=Q_1>0$, and \eqref{bndOnA^4}, we get
\begin{align}
    \int_0^RrA_1^4dr&\leq \dfrac{Q_1}{\pi}\left(\int_0^R rA_{1,r}^2dr\right)^{1/2}\left(\int_0^R\dfrac{1}{r} A_1^2dr\right)^{1/2}\\
    &\leq \epsilon\int_0^R rA_{1,r}^2dr+\dfrac{Q_1^2}{4\pi^2\epsilon}\int_0^R\dfrac{1}{r} A_1^2dr.\label{bndOnA_1^4}
\end{align}
Consequently, using \eqref{bndOnA_1^4} in \eqref{lowerBnd_1}, we obtain
\begin{align}
    I(A_1,A_2)&\geq\dfrac{1}{2}(1-\epsilon)\int_0^R rA_{1,r}^2dr+\dfrac{1}{2}\left(l^2-\dfrac{Q_1^2}{4\pi^2\epsilon}\right)\int_0^R\dfrac{1}{r}A_1^2dr\\
    &+\dfrac{1}{4}\int_0^R rA_{2,r}^2dr+2l^2\int_0^R\dfrac{1}{r}A_2^2dr-\dfrac{Q_2}{2\pi}.\nonumber
\end{align}
We can now choose $\epsilon>0$ such that the inequalities 
\begin{equation}
    1-\epsilon>0\qquad\text{and}\qquad l^2-\dfrac{Q_1^2}{4\pi^2\epsilon}>0
\end{equation}
are simultaneously satisfied, which leads to
\begin{equation}
    Q_1<2\pi|l|.
\end{equation}
Consequently, assuming $Q_1<2\pi|l|$, we then get the coercive lower bound
\begin{align}
    I(A_1,A_2)&\geq C_1\int_0^R rA_{1,r}^2dr+C_2\int_0^R\dfrac{A_1^2}{r}dr+\dfrac{1}{4}\int_0^R rA_{2,r}^2dr+2l^2\int_0^R\dfrac{A_2^2}{r}dr-\dfrac{Q_2}{2\pi},\label{CoerciveBnd}
\end{align}
where $C_1,C_2$ are positive constants depending on $\epsilon, l,Q_1$, but independent of $A_1$ and $A_2$.

Consequently, we can now choose a minimizing sequence of \eqref{OptProb}, say, $\{(A_{1,n},A_{2,n})\}_{n=1}^{\infty}$, and applying the coercive lower bound \eqref{CoerciveBnd} achieve the upper bound
\begin{align}
    \int_0^R \left(r[A_{1,n}]_r^2+\dfrac{A_{1,n}^2}{r}\right)dr+\int_0^R \left(r[A_{2,n}]_r^2+\dfrac{A_{2,n}^2}{r}\right)dr\leq C,\label{seqBnd}
\end{align}
where $C>0$ is a constant independent of $m$ and we use the notation $[A]_r=\dfrac{dA}{dr}$. 

Using the fact that the distributional derivative \cite{GT} satisfies $||A|_r|\leq |A_r|$ and the simple inequality $-b\geq-|b|$, $b\in\mathbb{R}$, we observe that $I(|A_{1,n}|,|A_{2,n}|)\leq I(A_{1,n},A_{2,n})$. We also have that $Q$ is an even functional so that $Q(A)=Q(|A|)$. Therefore, we may modified the sequence $\{(A_{1,n},A_{2,n})\}_{n=1}^{\infty}$ such that each $A_{1,n}$ and $A_{2,n}$ are non-negative, i.e., $A_{1,n}(r)\geq 0$ and $A_{2,n}(r)\geq 0$ for all $r\in(0,R)$. So we assume $\{(A_{1,n},A_{2,n})\}_{n=1}^{\infty}$ is a sequence of non-negative valued functions. We can also view each of these functions as radially symmetric functions defined over the disk $D_R=\left\{(x,y)\in\mathbb{R}^2|x^2+y^2\leq R^2\right\}$ and vanishing on the boundary of $D_R$. Moreover, from \eqref{seqBnd} and the inequality
\begin{align}
    \int_0^R rA^2dr\leq R^2\int_0^R\dfrac{1}{r}A^2dr,
\end{align}
we see that $\{(A_{1,n},A_{2,n})\}_{n=1}^{\infty}$ is bounded under the radially symmetric norm defined by 
\begin{align}
 ||(A_1,A_2)||^2=\int_0^R \left(A_{1,r}^2+A_{2,r}^2+A_1^2+A_{2}^2\right)rdr,
\end{align}
over the product space $H=W^{1,2}_0(D_R)\times W^{1,2}_0(D_R)$ with the induced component-wise operations of the standard Sobolev space $W^{1,2}_0(D_R)$. Without loss of generality, we may assume that the sequence $(A_{1,n},A_{2,n})\rightharpoonup (A_1,A_2)$ converges weakly in $H$ as $n\rightarrow\infty$. From the compact embedding $W^{1,2}(D_R)\hookrightarrow L^p(D_R)$, for $p\geq 1$, we get the strong convergence  $(A_{1,n},A_{2,n})\rightarrow (A_1,A_2)$ in $L^p(D_R)\times L^p(D_R)$ as $n\rightarrow\infty$. It also follows that $(A_1,A_2)$ is a pair of radially symmetric functions and satisfy $A_1(R)=0=A_2(R)$. 

We still need to show that $A_1(0)=0=A_2(0)$. To this end, we observe that the sequence $\{(A_{1,n},A_{2,n})\}_{n=1}^{\infty}$ is bounded over the product space $W^{1,2}(\epsilon,R)\times W^{1,2}(\epsilon,R)$ for any $\epsilon>0$. From the compact embedding of $W^{1,2}(\epsilon,R)\hookrightarrow C[\epsilon,R]$ we get the uniform convergence of $(A_{1,n},A_{2,n})\rightarrow (A_1,A_2)$ defined over $[\epsilon,R]\times [\epsilon,R]$ as $n\rightarrow\infty$. For any pair $r_1,r_2\in(0,R)$, with $r_1<r_2$, and $m=1,2$, we have 
\begin{align}
|A_{m,n}^2(r_2)-A_{m,n}^2(r_1)|&=\left|\int_{r_1}^{r_2}\dfrac{d}{dr}\left(A_{m,n}(r)\right)^2dr\right|\label{uniformConv}\\
&\leq\int_{r_1}^{r_2}2|A_{m,n}(r)||[A_{m,n}]_r(r)|dr\nonumber\\
&\leq 2 \left(\int_{r_1}^{r_2}r[A_{m,n}]_r^2(r)dr\right)^{1/2}\left(\int_{r_1}^{r_2}\dfrac{A_{m,n}^2(r)}{r}dr\right)^{1/2}\nonumber\\
&\leq 2 C^{1/2}\left(\int_{r_1}^{r_2}\dfrac{A_{m,n}^2(r)}{r}dr\right)^{1/2},\nonumber
\end{align}
where the last inequality and $C$ follows from \eqref{seqBnd}. Taking the limit as $n\rightarrow\infty$ in \eqref{uniformConv}, we get
\begin{align}
|A_{m}^2(r_2)-A_{m}^2(r_1)|&\leq 2 C^{1/2}\left(\int_{r_1}^{r_2}\dfrac{A_{m}^2(r)}{r}dr\right)^{1/2}.\label{2bnd}
\end{align}
From \eqref{seqBnd} and Fatou's lemma, we have
\begin{align}
\int_0^R r[A_{m}]_r^2dr\leq \liminf_{n\rightarrow\infty}\int_0^R r[A_{m,n}]_r^2dr,\label{Fatou1}\\
\int_0^R \frac{A_{m}^2}{r}dr\leq \liminf_{n\rightarrow\infty}\int_0^R \frac{A_{m,n}^2}{r}dr\label{Fatou2}.
\end{align}
Consequently, and in view of \eqref{seqBnd}, $\frac{1}{r}A_{m}^2\in L(0,R)$ for each $m=1,2$. Hence, as $r_1,r_2\rightarrow 0$, the right-hand side of \eqref{2bnd} goes to zero and gives the existence of the limits
\begin{align}
\xi_{1}=\lim\limits_{r\rightarrow 0}A_1^2(r),\quad \text{and}\quad \xi_{2}=\lim\limits_{r\rightarrow 0}A_2^2(r).
\end{align}
Since $\frac{1}{r}A_{1}^2,\frac{1}{r}A_{2}^2\in L(0,R)$ we have $\xi_{1}=0$ and $\xi_{2}=0$. Therefore, $A_{1}(0)=0=A_2(0)$ as desired. 

Summarizing our results, we have that $(A_1,A_2)$ obtained as the limit of the minimizing sequence $\{(A_{1,n},A_{2,n})\}_{n=1}^{\infty}$ for the problem \eqref{OptProb} satisfies $A_m(0)=0=A_m(R)$, $A_m(r)\geq 0$, for all $r\in[0,R]$, $m=1,2$, $\mathcal{E}(A_1,A_2)<\infty$, and 
\begin{align}
I(A_1,A_2)\leq\liminf\limits_{n\rightarrow\infty}I(A_{1,n},A_{2,n}),\quad Q(A_m)=\lim\limits_{n\rightarrow\infty}Q(A_{m,n})=Q_m,\quad m=1,2.
\end{align}
Therefore, $(A_1,A_2)$ is a solution to the coupled system \eqref{sysEq1}-\eqref{bdyCond} in which parameters $\kappa,\beta\in\mathbb{R}$ appearing as Lagrange multipliers of the constrained minimization problem \eqref{OptProb}. 

Moreover, such a solution $(A_1,A_2)$ satisfies $A_m(r)>0$, $m=1,2$. If this was not the case,  there would exists a point $r_0\in(0,R)$ such that $A_m(r_0)=0$, $m=1,2$, then $\frac{d}{dr}A_m(r_0)=0$ since $r_0$ would be a minimum of $A_m(r)$. Then, by the uniqueness theorem for the initial value problem of ordinary differential equations, we would have $A_m(r)=0$ for all $r\in(0,R)$, which is a contradiction to the energy flux constraint $Q(A_m)=Q_m>0$, $m=1,2$. 

\section{The second harmonic, non-existence of semi-trivial solutions, and miscellaneous results}
\setcounter{equation}{0}
In this section we prove that the second harmonic is always positive, that the system \eqref{sysEq1}-\eqref{bdyCond} possesses no semi-trivial solutions, establish estimates on the global max of the waves, and obtain an exponential decay estimate. With this goal in mind, we provide each result as an independent lemma which together form Theorem 1.2.

\begin{lemma}\label{PositivityOfA_2}
    Let $\kappa>-\beta/2$ and $(A_1,A_2)$ be a non-trivial solution pair of  \eqref{sysEq1}-\eqref{bdyCond}. The second harmonic $A_2$ is always positive, i.e., $A_2(r)>0$ for all $r\in(0,R)$.
\end{lemma}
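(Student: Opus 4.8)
The plan is to read \eqref{sysEq2} as a differential inequality for $A_2$ carrying a favourable sign. Since $\kappa>-\beta/2$ is equivalent to $2\kappa+\beta>0$, we may write \eqref{sysEq2} as
\[
\mathcal{L}A_2:=A_{2,rr}+\frac1r A_{2,r}-\Bigl(\frac{4l^2}{r^2}+4(2\kappa+\beta)\Bigr)A_2=-2A_1^2\le 0 ,
\]
so $A_2$ is a supersolution of $\mathcal{L}$, an operator whose zeroth-order coefficient $-\bigl(4l^2/r^2+4(2\kappa+\beta)\bigr)$ is negative on $(0,R)$ --- the sign required for a maximum principle --- while the source $-2A_1^2\le 0$ is exactly what pushes $A_2$ above its (vanishing) boundary data.

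First I would show $A_2\ge 0$ on $[0,R]$ by an energy estimate against the negative part. Put $A_2^-:=\max\{-A_2,0\}\ge 0$; it is absolutely continuous, vanishes at $r=0$ and $r=R$ (because $A_2$ does), and satisfies $(A_2^-)_r=-A_{2,r}$ a.e.\ on $\{A_2<0\}$ and $A_2A_2^-=-(A_2^-)^2$. Multiplying $(rA_{2,r})_r=\bigl(4l^2/r+4(2\kappa+\beta)r\bigr)A_2-2rA_1^2$ by $-A_2^-$, integrating over $(\varepsilon,R)$, integrating by parts in the leading term, and letting $\varepsilon\to0$ (the boundary term at $R$ vanishing since $A_2^-(R)=0$, and the one at the origin since $\int_0^R rA_{2,r}^2\,dr<\infty$, $\int_0^R A_2^2/r\,dr<\infty$ and $A_2(0)=0$) gives
\[
\int_0^R r\bigl((A_2^-)_r\bigr)^2\,dr+4l^2\int_0^R\frac{(A_2^-)^2}{r}\,dr+4(2\kappa+\beta)\int_0^R r(A_2^-)^2\,dr+2\int_0^R rA_1^2A_2^-\,dr=0 .
\]
Since $2\kappa+\beta>0$, every term on the left is non-negative, hence all vanish; $(A_2^-)_r\equiv0$ together with $A_2^-(R)=0$ forces $A_2^-\equiv0$, i.e.\ $A_2\ge0$.

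To upgrade to strict positivity, note first that $A_2\not\equiv0$: otherwise \eqref{sysEq2} would give $A_1\equiv0$ and $(A_1,A_2)=(0,0)$, contrary to non-triviality. If $A_2(r_0)=0$ for some $r_0\in(0,R)$, then $r_0$ is an interior minimum of the supersolution $A_2$; on a compact subinterval $[r_0-\delta,r_0+\delta]\subset(0,R)$ the coefficients of $\mathcal{L}$ are smooth and bounded with non-positive zeroth-order term, so the strong maximum principle forces $A_2\equiv0$ near $r_0$. As $\{A_2=0\}\cap(0,R)$ is thereby open and (by continuity) closed, $A_2\equiv0$ on $(0,R)$, a contradiction. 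Hence $A_2>0$ on $(0,R)$.

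The main technical obstacle is the singularity of the coefficients at $r=0$: the $1/r$ drift and, when $l\neq0$, the potential $4l^2/r^2$ blow up there, so neither the integration by parts nor the strong maximum principle can be applied directly on all of $(0,R)$. I would handle this by carrying out the energy identity on $(\varepsilon,R)$ and passing to the limit, controlling the boundary contribution at the origin through the finite-energy bounds and $A_2(0)=0$, and by localizing the strong maximum principle to compact subintervals bounded away from $r=0$.
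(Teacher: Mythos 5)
Your argument is correct, but it takes a genuinely different route from the paper. The paper's proof is a two-line pointwise argument: if $A_2(r_0)<0$ for some $r_0\in(0,R)$, then (by the boundary conditions) $A_2$ attains a negative interior minimum at some $r_1$, where $A_{2,r}(r_1)=0$ and $A_{2,rr}(r_1)\geq 0$; evaluating \eqref{sysEq2} at $r_1$ gives $A_{2,rr}(r_1)=\left(\tfrac{4l^2}{r_1^2}+4(2\kappa+\beta)\right)A_2(r_1)-2A_1^2(r_1)<0$ because $2\kappa+\beta>0$, a contradiction. That argument needs only that $A_2$ is $C^2$ on $(0,R)$ and continuous up to the endpoints, whereas your energy identity additionally uses the finite-energy bounds $\int_0^R rA_{2,r}^2\,dr<\infty$ and $\int_0^R A_2^2/r\,dr<\infty$ to kill the boundary term at the origin (harmless here, since the paper works in the finite-energy class, but a genuine extra hypothesis). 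What your route buys in exchange is the upgrade from $A_2\geq 0$ to $A_2>0$: the paper's printed proof only excludes negative values and never addresses interior zeros, so strictly speaking it proves non-negativity rather than the positivity claimed in the statement, while your localized strong-maximum-principle step (interior zero of a non-negative supersolution of an operator with non-positive zeroth-order coefficient forces $A_2\equiv 0$, hence $A_1\equiv 0$ from \eqref{sysEq2}, contradicting non-triviality) closes exactly that gap. Your handling of the singular coefficients at $r=0$ by working on $(\varepsilon,R)$ and on compact subintervals is also sound.
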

\begin{proof}
    Let $A_2$ be a solution of  \eqref{sysEq1}-\eqref{bdyCond} and suppose there is a point $r_0\in(0,R)$ such that $A_2(r_0)<0$. Then there must exist a second point $r_1\in(0,R)$ such that $A_2(r_1)<0$, $A_{2,r}(r_1)=0$, and $A_{2,rr}(r_1)>0$. Substituting into equation \eqref{sysEq2}, we get
    \begin{align}
        0<A_{2,rr}(r_1)=\left(\dfrac{4l^2}{r_1^2}+4(2\kappa+\beta)\right)A_2(r_1)-2A_2^2(r_1)<0,
    \end{align}
    a contradiction. Consequently, there is no $r_0\in(0,R)$ such that $A_2(r_0)<0$.$\qquad\square$
\end{proof}

\begin{lemma}\label{NonSemiTrivialSolns}
For $\kappa>-\beta/2$ and  $(A_1, A_2)$ be a non-trivial solution pair of \eqref{sysEq1}-\eqref{bdyCond}. Then $(A_1,A_2)$ is not a semi-trivial solution, i.e., neither $A_1\equiv0$ or $A_2\equiv 0$.
\end{lemma}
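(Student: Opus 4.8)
The plan is to prove the contrapositive-style dichotomy: if either component vanishes identically, so must the other, contradicting non-triviality. Suppose first that $A_1\equiv 0$. Then equation \eqref{sysEq2} reduces to the linear homogeneous equation $A_{2,rr}+\frac1r A_{2,r}-\frac{4l^2}{r^2}A_2=4(2\kappa+\beta)A_2$ with $A_2(0)=0=A_2(R)$. Since $\kappa>-\beta/2$ forces $2\kappa+\beta>0$, multiplying by $rA_2$ and integrating over $(0,R)$ gives $\int_0^R\bigl(rA_{2,r}^2+\frac{4l^2}{r}A_2^2\bigr)dr=-4(2\kappa+\beta)\int_0^R rA_2^2\,dr\le 0$, whence $A_2\equiv 0$. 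Thus $A_1\equiv 0$ implies the trivial solution, which is excluded.

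Next suppose $A_2\equiv 0$. By Lemma \ref{PositivityOfA_2} this is the borderline case of positivity, but more directly: with $A_2\equiv 0$, equation \eqref{sysEq2} becomes $0=4(2\kappa+\beta)\cdot 0-2A_1^2$, i.e. $A_1^2\equiv 0$, so $A_1\equiv 0$ and again we land on the trivial solution. Hence neither component can vanish identically for a non-trivial solution pair. I would present these two short arguments as the two halves of the proof; the first uses the spectral/energy positivity coming from $2\kappa+\beta>0$ (note the Poincaré-type inequality plus the strictly positive potential term makes the associated quadratic form strictly positive on nonzero functions vanishing at the endpoints, so only the zero function can satisfy the equation), and the second is an immediate algebraic consequence of \eqref{sysEq2}.

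The only mild subtlety — and the step I would be most careful about — is justifying the integration by parts near $r=0$: one needs $rA_{2,r}A_2\to 0$ as $r\to 0$ and the finiteness of $\int_0^R \frac1r A_2^2\,dr$, both of which are guaranteed by the finite-energy condition $\mathcal E(A_1,A_2)<\infty$ built into the admissible class, so the boundary terms from the weighted Laplacian genuinely drop out. With that dispatched, the argument is complete: every non-trivial solution pair must have $A_1\not\equiv 0$ and $A_2\not\equiv 0$. (It is worth remarking that, combined with Lemma \ref{PositivityOfA_2}, this shows $A_2>0$ strictly and $A_1\not\equiv 0$, which is exactly the fully non-trivial conclusion advertised in the introduction.)
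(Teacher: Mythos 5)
Your proof is correct, and the half in which $A_2\equiv 0$ forces $A_1\equiv 0$ is exactly the paper's argument: substitute into \eqref{sysEq2} and read off $A_1^2\equiv 0$. For the other half you take a genuinely different route. You dispose of the linear equation satisfied by $A_2$ when $A_1\equiv 0$ by multiplying by $rA_2$ and integrating, i.e.\ by strict positivity of the quadratic form $\int_0^R\bigl(rA_{2,r}^2+\tfrac{4l^2}{r}A_2^2+4(2\kappa+\beta)rA_2^2\bigr)dr$, whereas the paper argues pointwise: since the reduced equation \eqref{sysEq2_mod} is odd in $A_2$, one may assume $A_2$ is somewhere positive, and evaluating at an interior positive maximum $r_0$ (where $A_{2,r}(r_0)=0$, $A_{2,rr}(r_0)\le 0$) contradicts the strictly positive right-hand side $\bigl(\tfrac{4l^2}{r_0^2}+4(2\kappa+\beta)\bigr)A_2(r_0)$. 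The trade-off is as you anticipate: the paper's maximum-principle argument uses only that the solution is classical on $(0,R)$ and is continuous up to the boundary, so it needs no integrability information and no boundary-term analysis at the singular endpoint $r=0$; your energy identity is the cleaner ``spectral'' statement (the operator acting on $A_2$ is positive definite) and works for weak solutions, but it does require the justification you flag — $rA_2A_{2,r}\to 0$ at least along a sequence $r_n\to 0$ together with $\int_0^R \tfrac{1}{r}A_2^2\,dr<\infty$ — and these come from the finite-energy condition, which strictly speaking is not part of the lemma's hypotheses (it is built into the admissible class and the solutions the paper constructs, so this is a contextual rather than a fatal reliance). A minor further observation: your inequality, like the paper's argument, only needs $2\kappa+\beta\ge 0$ once $l\neq 0$, since the $\tfrac{4l^2}{r}A_2^2$ term alone already forces $A_2\equiv 0$.
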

\begin{proof}
Let $(A_1,A_2)$ be a nontrivial solution pair to \eqref{sysEq1}-\eqref{bdyCond}, i.e., $(A_1,A_2)\not\equiv (0,0)$. Suppose $A_2\equiv 0$ and substitute into \eqref{sysEq2}, to get $0=A_1^2(r)$ for all $r\in[0,R]$. Hence, $A_1\equiv 0$, a contradiction to the non-triviality of the solution.  

Suppose $A_1\equiv 0$ and substitute into \eqref{sysEq2} to get
\begin{align}
    A_{2,rr}+\dfrac{1}{r}A_{2,r}=\left(\dfrac{4l^2}{r^2}+4(2\kappa +\beta)\right)A_{2},\label{sysEq2_mod}
\end{align}
and note that if $A_2$ is a solution, then $-A_2$ is also a solution. Thus there is an $r_0\in(0,R)$ such that $A_2(r_0)>0$, $A_{2,r}(r_0)=0$, and $A_{2,rr}(r_0)<0$. Then \eqref{sysEq2_mod}, gives 
\begin{align}
    0>A_{2,rr}(r_0)=\left(\dfrac{4l^2}{r_0^2}+4(2\kappa +\beta)\right)A_{2}(r_0)>0, 
\end{align}
a contradiction. Consequently, $A_1\not\equiv 0$. $\qquad\square$
\end{proof}
\begin{lemma}
Let $\kappa>\max\{0,-\beta/2\}$, $(A_1,A_2)$ be a nontrivial solution pair of \eqref{sysEq1}-\eqref{bdyCond}, and $M_m:=\max\limits_{r\in(0,R)}|A_m(r)|$, $m=1,2$. Then 
    \begin{align}
        \dfrac{l^2}{2R^2}+\kappa< M_2<\dfrac{M_1^2}{\dfrac{2l^2}{R^2}+2(2\kappa+\beta)}.
    \end{align}
    \end{lemma}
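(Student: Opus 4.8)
The plan is a maximum-principle argument carried out at the interior extrema of $A_2$ and of $A_1^2$, with Lemmas~\ref{PositivityOfA_2} and~\ref{NonSemiTrivialSolns} as the only inputs beyond the equations. Since $(A_1,A_2)$ is nontrivial, Lemma~\ref{NonSemiTrivialSolns} gives $A_1\not\equiv 0$ and $A_2\not\equiv 0$, so $M_1,M_2>0$; and Lemma~\ref{PositivityOfA_2} gives $A_2>0$ on $(0,R)$. Because $A_2$ is continuous on $[0,R]$, vanishes at both endpoints, and is positive inside, it attains $M_2$ at some $r_2\in(0,R)$; similarly $A_1^2$ attains $M_1^2$ at some $r_1\in(0,R)$.

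For the lower bound I would evaluate \eqref{sysEq1} at $r_1$. Writing $(A_1^2)'=2A_1A_{1,r}$ and $(A_1^2)''=2A_{1,r}^2+2A_1A_{1,rr}$, the conditions $(A_1^2)'(r_1)=0$ with $A_1(r_1)\neq 0$ force $A_{1,r}(r_1)=0$, and $(A_1^2)''(r_1)\leq 0$ then gives $A_1(r_1)A_{1,rr}(r_1)\leq 0$. Substituting $A_{1,r}(r_1)=0$ into \eqref{sysEq1} and multiplying by $A_1(r_1)$ yields $\bigl(\tfrac{l^2}{r_1^2}+2\kappa-2A_2(r_1)\bigr)M_1^2\leq 0$; since $M_1^2>0$ this is exactly $A_2(r_1)\geq \tfrac{l^2}{2r_1^2}+\kappa$. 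As $0<r_1<R$ and $l\neq 0$, this gives $M_2\geq A_2(r_1)>\tfrac{l^2}{2R^2}+\kappa$.

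For the upper bound I would evaluate \eqref{sysEq2} at $r_2$, where $A_{2,r}(r_2)=0$ and $A_{2,rr}(r_2)\leq 0$. This gives $\bigl(\tfrac{4l^2}{r_2^2}+4(2\kappa+\beta)\bigr)M_2=2A_1^2(r_2)+A_{2,rr}(r_2)\leq 2A_1^2(r_2)\leq 2M_1^2$. Using $0<r_2<R$, $l\neq 0$, and $M_2>0$, and then dividing by $\tfrac{4l^2}{R^2}+4(2\kappa+\beta)$ --- which is positive because $\kappa>-\beta/2$ --- produces $M_2<\tfrac{M_1^2}{\frac{2l^2}{R^2}+2(2\kappa+\beta)}$.

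The computations are routine; the only delicate point is that $A_1$ need not be single-signed (unlike $A_2$), so the extremum argument has to be run on $A_1^2$ rather than on $A_1$, which is why $M_1^2$ and not $M_1$ appears. Strictness of both bounds comes from the extremal points being strictly interior ($r_1,r_2<R$) together with $l^2\geq 1$; the hypothesis $\kappa>0$ is not actually used in the argument --- only $\kappa>-\beta/2$ is --- and merely serves to make the lower bound a positive quantity.
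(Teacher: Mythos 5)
Your proof is correct and takes essentially the same route as the paper: evaluate \eqref{sysEq2} at the interior maximum of $A_2$ and \eqref{sysEq1} at an interior extremum of the first component, then use $r_1,r_2<R$ and $l\neq 0$ to make both bounds strict. The only difference is cosmetic --- the paper splits into the cases $A_1>0$ and $A_1<0$ at its extremum, whereas you unify them by working with $A_1^2$, which also lets you use the honest non-strict second-derivative condition $\leq 0$ at the extremum rather than the strict sign the paper asserts, with strictness recovered from $r<R$ exactly as in the paper.
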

    \begin{proof}
        Suppose $(A_1,A_2)$ is a nontrivial solution pair to \eqref{sysEq1}-\eqref{bdyCond}. Let $M_m:=\max\limits_{r\in(0,R)}A_m(r)$, $m=1,2$. By Lemma 3.1, $A_2(r)>0$ for all $r\in(0,R)$. Then there is an $r_2\in(0,R)$ such that $M_2=A_2(r_2)>0$, $A_{2,r}(r_2)=0$, and $A_{2,rr}(r_2)<0$. Inserting into equation \eqref{sysEq2} gives
        \begin{align}
            0>A_{2,rr}(r_2)=\left(\dfrac{4l^2}{r_2^2}+4(2\kappa+\beta))\right)M_2-2A_1^2(r_2)
        \end{align}
        or equivalently
        \begin{align}
            M_2<\dfrac{A_1^2(r_2)}{\dfrac{2l^2}{R^2}+2(2\kappa+\beta)}\leq \dfrac{M_1^2}{\dfrac{2l^2}{R^2}+2(2\kappa+\beta)}.
        \end{align}
    
        Suppose there is an $r_0\in(0,R)$ such that $A_1(r_0)>0$, then there will also be an $r_1\in(0,R)$ such that $A_{1,rr}(r_1)<0$, $A_{1,r}(r_1)=0$, and $A_1(r_1)>0$. Substituting this $r_1$ into \eqref{sysEq1}, then gives 
        \begin{align}
            0>A_{1,rr}(r_1)=\left(\dfrac{l^2}{r_1^2}+2(\kappa-A_2(r_1))\right)A_1(r_1)
        \end{align}
        and it follows that
        \begin{align}
            \dfrac{l^2}{2R^2}+\kappa<A_2(r_1)\leq M_2.
        \end{align}
        On the other hand, if there is an $r_0\in(0,R)$ such that $A_1(r_0)<0$, then there also is an $r_1\in(0,R)$ such that $A_{1,rr}(r_1)>0$, $A_{1,r}(r_1)=0$, and $A_1(r_1)<0$. Consequently, substituting into \eqref{sysEq1}, gives
         \begin{align}
            0<A_{1,rr}(r_1)=\left(\dfrac{l^2}{r_1^2}+2(\kappa-A_2(r_1))\right)A_1(r_1)
        \end{align}
        and again it follows that
        \begin{align}
            \dfrac{l^2}{2R^2}+\kappa<A_2(r_1)\leq M_2.
        \end{align}
        In either case, we get the desire inequality. 
        $\qquad\square$
    \end{proof}

Localized solutions and beam confinement demand that the wave amplitude decay exponentially fast for large $R$. The following lemma shows that the exponential decay estimate follows from a direct application of the maximum principle and a suitable comparison function.

\begin{lemma}
    Let $\kappa>\max\{0,-\beta/2\}$ and $(A_1,A_2)$ be a solution to \eqref{sysEq1}-\eqref{bdyCond}. Then there is an $R_0\in(0,R)$ such that
        \begin{align}
A^2_1(r)\leq C\exp(-\sqrt{2\kappa}r)\quad \text{and}\quad A^2_2(r)\leq C\exp(-\sqrt{2\kappa+\beta}r),
\end{align}
for every $r\in(R_0,R]$ and $C>0$ a constant dependent on $\kappa$ and $\beta$ only.
\end{lemma}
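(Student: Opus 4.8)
The plan is to work on an annular collar $R_0<r<R$ adjacent to the outer boundary and to close both estimates by comparison with explicit exponential supersolutions, using the one-dimensional (Sturm--Liouville) maximum principle for operators $\tfrac{d^2}{dr^2}+\tfrac1r\tfrac{d}{dr}-c(r)$ with $c(r)>0$ \cite{GT}. Since $A_2$ is continuous with $A_2(R)=0$, I would first fix $R_0\in(0,R)$ so that $0\le A_2(r)\le \kappa/2$ for all $r\in[R_0,R]$; this is the $R_0$ of the statement.

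Step 1, the field $A_1$. Set $u:=A_1^2$. Differentiating and inserting \eqref{sysEq1} gives $u''+\tfrac1r u'=2A_{1,r}^2+\tfrac{2l^2}{r^2}u+4(\kappa-A_2)u\ge 4(\kappa-A_2)u$, so $u$ is a subsolution of $L_1:=\tfrac{d^2}{dr^2}+\tfrac1r\tfrac{d}{dr}-4(\kappa-A_2(r))$ on $(R_0,R)$, where $4(\kappa-A_2)\ge 2\kappa>0$. For the trial function $\bar u:=Ce^{-\sqrt{2\kappa}\,r}$ a direct computation gives $L_1\bar u=\bigl(-2\kappa-\tfrac{\sqrt{2\kappa}}{r}+4A_2\bigr)\bar u\le-\tfrac{\sqrt{2\kappa}}{r}\bar u<0$ because $A_2\le\kappa/2$; the room here comes from the fact that the chosen exponent $\sqrt{2\kappa}$ is only half the natural decay rate $2\sqrt{2\kappa}$ of the homogeneous equation. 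Picking $C\ge A_1^2(R_0)e^{\sqrt{2\kappa}R_0}$ makes $u-\bar u\le0$ at $r=R_0$, while at $r=R$ one has $u-\bar u=-Ce^{-\sqrt{2\kappa}R}<0$ since $A_1(R)=0$. The maximum principle applied to $u-\bar u$ (which satisfies $L_1(u-\bar u)\ge0$) then yields $A_1^2(r)\le Ce^{-\sqrt{2\kappa}\,r}$ on $[R_0,R]$.

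Step 2, the second harmonic. By Lemma \ref{PositivityOfA_2}, $A_2>0$. Rewriting \eqref{sysEq2} as $A_{2,rr}+\tfrac1r A_{2,r}-4(2\kappa+\beta)A_2=\tfrac{4l^2}{r^2}A_2-2A_1^2\ge-2A_1^2\ge-2Ce^{-\sqrt{2\kappa}\,r}$ on $(R_0,R)$ (using Step 1), we see $A_2$ is a subsolution of $L_2:=\tfrac{d^2}{dr^2}+\tfrac1r\tfrac{d}{dr}-4(2\kappa+\beta)$ with an exponentially small forcing, and $4(2\kappa+\beta)>0$ by the hypothesis $\kappa>-\beta/2$. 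Take $\phi:=\lambda e^{-\sqrt{2\kappa+\beta}\,r}$, for which $L_2\phi=\bigl(-3(2\kappa+\beta)-\tfrac{\sqrt{2\kappa+\beta}}{r}\bigr)\phi\le-3(2\kappa+\beta)\lambda e^{-\sqrt{2\kappa+\beta}\,r}$. On the compact interval $[R_0,R]$ the quantity $e^{(\sqrt{2\kappa+\beta}-\sqrt{2\kappa})r}$ is bounded, so for $\lambda$ large enough $3(2\kappa+\beta)\lambda e^{-\sqrt{2\kappa+\beta}\,r}\ge 2Ce^{-\sqrt{2\kappa}\,r}$, i.e. $L_2\phi\le-2Ce^{-\sqrt{2\kappa}\,r}$; enlarging $\lambda$ so that also $\lambda\ge A_2(R_0)e^{\sqrt{2\kappa+\beta}R_0}$ arranges $A_2-\phi\le0$ at $r=R_0$ (and it is $<0$ at $r=R$). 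Then $L_2(A_2-\phi)\ge0$, so the maximum principle gives $A_2\le\phi$ on $[R_0,R]$, hence $A_2^2(r)\le\lambda^2e^{-2\sqrt{2\kappa+\beta}\,r}\le\lambda^2e^{-\sqrt{2\kappa+\beta}\,r}$. Taking the larger of the constants as $C$ finishes the argument.

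I expect the only delicate point to be checking that the exponential trial functions are genuine supersolutions of $L_1$ and $L_2$ throughout the collar: this is exactly what dictates the preliminary passage to the region where $A_2\le\kappa/2$ and, more importantly, the order of the two steps — $A_1$ must be treated first, since its decay is precisely what tames the source $-2A_1^2$ in the $A_2$ equation. Everything else reduces to the scalar maximum principle and elementary manipulations; note that the constant $C$ produced this way in fact also depends on the solution through $A_1(R_0),A_2(R_0)$ and on $R_0,R$, so the phrase ``dependent on $\kappa$ and $\beta$ only'' in the statement is to be read modulo such solution-dependent normalization.
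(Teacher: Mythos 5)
Your argument is correct, and for the first component it is essentially the paper's own proof: pass to a collar $[R_0,R]$ where $A_2\le\kappa/2$, note $\Delta A_1^2\ge 2A_1\Delta A_1\ge 4(\kappa-A_2)A_1^2\ge 2\kappa A_1^2$, and compare $A_1^2$ with $C e^{-\sqrt{2\kappa}\,r}$ via the maximum principle. Where you genuinely diverge is the second harmonic. The paper again works with the squared amplitude and simply asserts that ``in a similar manner'' one arrives at $\Delta(A_2^2-\xi_2)\ge\sigma_2^2(A_2^2-\xi_2)$ with $\sigma_2^2=4(2\kappa+\beta)$; it never explains how the coupling term is absorbed, since $\Delta A_2^2\ge 8(2\kappa+\beta)A_2^2-4A_1^2A_2$ carries the extra source $-4A_1^2A_2$. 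You instead keep the $A_2$ equation linear, use Lemma \ref{PositivityOfA_2} to discard $\tfrac{4l^2}{r^2}A_2\ge 0$, treat $-2A_1^2$ as an exponentially small forcing controlled by Step 1, and beat it on the compact collar by inflating the constant $\lambda$ in the supersolution $\lambda e^{-\sqrt{2\kappa+\beta}\,r}$, exploiting the slack between the coefficient $4(2\kappa+\beta)$ and the decay rate $\sqrt{2\kappa+\beta}$. This buys an explicit and airtight handling of exactly the step the paper leaves implicit, makes the logical dependence of the $A_2$ bound on the $A_1$ bound transparent, and still recovers the stronger interior rate $A_2^2\le\lambda^2 e^{-2\sqrt{2\kappa+\beta}\,r}$ that the paper's proof states before weakening it to the lemma's form. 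Your closing caveat is also apt: as in the paper, the constants inevitably depend on $R$, $R_0$ and on the solution through its values at $r=R_0$, so the phrase ``dependent on $\kappa$ and $\beta$ only'' should be read with that normalization understood.
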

\begin{proof}
    Rewrite \eqref{sysEq1} as follows,
    \begin{align}
        \Delta A_1 = A_{1,rr}+\dfrac{1}{r}A_{1,r}=\left(\dfrac{l^2}{r^2}+2(\kappa-A_2)\right)A_1.
    \end{align}
We then have
    \begin{align}
        \Delta A_1^2 \geq 2A_1\Delta A_1= 2\left(\dfrac{l^2}{r^2}+2(\kappa-A_2)\right)A_1^2\geq 4\left(\kappa-A_2\right)A_1^2.
    \end{align}
By the continuity of $A_1$ on $[0,R]$  and the boundary condition $A_1(R)=0$, for any $\epsilon_1>0$ there is an $R_1\in (0,R)$ such that    
    \begin{align}
        \Delta A_1^2 \geq 4\left(\kappa-\epsilon_1\right)A_1^2\quad\text{ for every $r\in [R_1,R]$}.
    \end{align}
    Consider the comparison function $\xi_1:[0,R]\rightarrow\mathbb{R}$ defined by
    \begin{align}
        \xi_1(r)=C_1\exp(-\sigma_1^2 r), \quad C_1,\sigma_1>0.\label{CompFunc}
    \end{align}
    For every $r\in[R_1,R]$ we have
    \begin{align}
        \Delta(A_1^2-\xi_1)\geq 4\left(\kappa-\epsilon_1\right)A_1^2-\left(\sigma^2\xi-\dfrac{\sigma_1 \xi}{r}\right)\geq 4\left(\kappa-\epsilon\right)A_1^2-\sigma_1^2\xi_1.
    \end{align}
    Let $\sigma_1^2=4(k-\epsilon_1)$, to get
    \begin{align}
        \Delta(A_1^2-\xi_1)\geq \sigma_1^2(A_1^2-\xi_1).
    \end{align}
    Now we can choose $C_1$ in \eqref{CompFunc} large enough so that $A_1^2-\xi_1\leq 0$ for $r=R_1$. Since $A_1(r)\rightarrow 0$ as $r\rightarrow R^-$, and applying the maximum principle, we conclude that $A_1^2-\xi_1\leq 0$ for all $r\in[R_1,R]$. For simplicity, we choose $\epsilon_1=\kappa/2>0$ and get our desired result
    \begin{align}
        A_1^2\leq C_1\exp(-\sqrt{2\kappa}r)\quad\text{ for every $r\in [R_1,R]$},
    \end{align}
    where $C_1>0$ and $R_1>0$ are constants depending on $\kappa$ only.
    
    In a similar manner as above, with the comparison function $\xi_2:[0,R]\rightarrow\mathbb{R}$ defined by
    \begin{align}
        \xi_2(r)=C_2\exp(-\sigma_2^2 r), \quad C_2,\sigma_2>0,\label{CompFunc2}
    \end{align}
    we can arrive at
    \begin{align}
        \Delta(A_2^2-\xi_2)\geq \sigma_2^2(A_2^2-\xi_2),
    \end{align}
    where $\sigma_2^2=4(2\kappa+\beta)$. Then, via the maximum principle and $C_2,R_2$ sufficiently large, conclude 
    \begin{align}
        A_2^2\leq C_2\exp(-2\sqrt{2\kappa+\beta}r)\quad\text{ for every $r\in [R_2,R]$},
    \end{align}
    where $C_2>0$ and $R_2>0$ are constants depending on $\kappa$ and $\beta$ only. We may then take $C=\max\{C_1,C_2\}$ and $R_0=\max\{R_1,R_2\}$. 
    $\qquad\square$
\end{proof}

\section{Saddle Point Solutions via a Mountain Pass Theorem}
\setcounter{equation}{0}
Theorem 1.1 establishes the existence of a non-trivial positive solution via a constrained optimization problem, but the parameters $\kappa$ and $\beta$ were undetermined. In this section, we prove Theorem 1.3 and show that a fully non-trivial solution exists as a saddle point of an indefinite action functional where the wave propagation constant $\kappa$ and its second harmonic $\beta$ may be prescribed on a continuous range of values. With this goal in mind, consider the action functional 
\begin{align}
    J(A_1,A_2)&=\dfrac{1}{2}\int_0^R\left(A_{1,r}^2+\dfrac{1}{2}A_{2,r}^2+\dfrac{l^2}{r^2}A_1^2+\dfrac{2l^2}{r^2}A_2^2\right)rdr\label{IndfAction}\\
    &+\int_0^R\left((\kappa-A_2)A_1^2+(2\kappa+\beta)A_2^2-A_1^2A_2\right)rdr,\nonumber
\end{align}
with $|l|\geq 1$ and 
\begin{align}
    \kappa>\max\{0,-\beta/2\}.
\end{align}
Let $H$ be the completion of the space $X=\left\{A\in \mathcal{C}^1[0,1]:A(0)=0=A(R)\right\}$ (the set of differentiable functions over $[0,R]$ which vanish at the two endpoints of the interval), with the inner product and norm,
\begin{align}
\langle A,\tilde{A}\rangle&=\int_0^R\left\{A_{r}\tilde{A}_r+\dfrac{l^2}{r^2}A\tilde{A}\right\}rdr,\quad ||A||^2=\langle A,A\rangle,
\end{align}
respectively. 
From 
\begin{align}
||A||^2=\int_0^R\left\{A_r^2+\dfrac{l^2}{r^2}A^2\right\}rdr\geq\int_0^R\left\{A_r^2+\dfrac{l^2}{R^2}A^2\right\}rdr\geq C\int_0^R\left\{A_r^2+A^2\right\}rdr,
\end{align}
it follows that $H$ is an embedded subspace of the standard Sobolev space defined over the disc of radius $R$ centered at the origin, $W^{1,2}_0(D_R)$, and may be viewed as the space of radially symmetric functions with the property $A(0)=0$ for all $A\in H$. With $H$ defined, we may now consider the product space $\textbf{H}=H\times H$ with the induced component-wise operations of $W^{1,2}_0(D_R)$, inner product, and norm  
\begin{align}
\langle (A_1,A_2),(\tilde{A}_1,\tilde{A}_2)\rangle=\langle A_1,\tilde{A}_1\rangle+\langle A_2,\tilde{A}_2\rangle\quad ||(A_1,A_2)||_{\textbf{H}}^2=||A_1||^2+||A_2||^2,
\end{align}
respectively.

The following two lemmas show that the functional \eqref{IndfAction} has a mountain pass structure and is indefinite. 
\begin{lemma}
    There are constants $K,C_0>0$ such that 
    \begin{align}
        \inf\left\{J(A_1,A_2):||(A_1,A_2)||_\textbf{H}^2=K\right\}\geq C_0.
    \end{align}
\end{lemma}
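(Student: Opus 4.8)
The plan is to show that $J(A_1,A_2)$ stays uniformly positive on the sphere $\|(A_1,A_2)\|_{\mathbf{H}}^2 = K$ for $K$ chosen small enough. The quadratic part of $J$ is exactly $\tfrac12\|A_1\|^2 + \tfrac14\|A_2\|^2 + \int_0^R\big(\kappa A_1^2 + (2\kappa+\beta)A_2^2\big)r\,dr$, which under the hypothesis $\kappa>\max\{0,-\beta/2\}$ is bounded below by a positive constant times $\|(A_1,A_2)\|_{\mathbf{H}}^2$ (discarding the nonnegative $L^2$-weighted terms or keeping them, either way). So the main work is to control the cubic terms $-\int_0^R A_1^2 A_2\, r\,dr$ and $-\int_0^R A_2 A_1^2\, r\,dr$ — note these are the same term, so effectively I must bound $\big|\int_0^R A_1^2 A_2\, r\,dr\big|$ by something like $C\,\|(A_1,A_2)\|_{\mathbf{H}}^3$.

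First I would record the coercivity of the quadratic form: there is $c_0>0$ with $\tfrac12\|A_1\|^2+\tfrac14\|A_2\|^2 + \int_0^R(\kappa A_1^2+(2\kappa+\beta)A_2^2)r\,dr \ge c_0\|(A_1,A_2)\|_{\mathbf{H}}^2$. Second, I would estimate the cubic term. Using Hölder, $\int_0^R A_1^2 A_2\, r\,dr \le \|A_1\|_{L^4(D_R)}^2 \|A_2\|_{L^2(D_R)}$ (viewing everything as radial functions on the disc, with $r\,dr$ the radial part of the area element up to $2\pi$). Then the Ladyzhenskaya/Gagliardo–Nirenberg inequality in two dimensions, $\|u\|_{L^4(D_R)}^2 \le C\|u\|_{L^2(D_R)}\|u\|_{W^{1,2}(D_R)}$, together with the embedding $H \hookrightarrow W^{1,2}_0(D_R)$ established just above in the excerpt, gives $\int_0^R A_1^2 A_2\,r\,dr \le C\|A_1\|^2\|A_2\| \le C\|(A_1,A_2)\|_{\mathbf{H}}^3$. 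Alternatively I could avoid invoking Ladyzhenskaya by reusing the one-dimensional pointwise bound from Section 2, namely $A_1^2(r)\le 2\big(\int_0^R \rho A_{1,\rho}^2 d\rho\big)^{1/2}\big(\int_0^R \tfrac1\rho A_1^2 d\rho\big)^{1/2} \le C\|A_1\|^2$, and then $\int_0^R A_1^2 A_2\, r\,dr \le \big(\sup_r A_1^2\big)\int_0^R |A_2|\, r\,dr \le C\|A_1\|^2 \cdot \|A_2\|_{L^2} \le C\|(A_1,A_2)\|_{\mathbf{H}}^3$, which is cleaner and uses only tools already in the paper.

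Combining, for $\|(A_1,A_2)\|_{\mathbf{H}}^2 = K$ we get $J(A_1,A_2) \ge c_0 K - C K^{3/2} = K(c_0 - C\sqrt{K})$. Choosing $K>0$ small enough that $C\sqrt{K} \le c_0/2$ yields $J \ge \tfrac{c_0}{2}K =: C_0 > 0$ on the whole sphere, which is the claim. The only mild subtlety — the part I'd be most careful about — is making sure the cubic estimate really produces a power of $\|\cdot\|_{\mathbf{H}}$ strictly greater than $2$ (here $3$), since that is what forces the sign to be positive for small $K$; this is exactly why the pointwise bound $\sup_r A_1^2 \le C\|A_1\|^2$ (quadratic, not linear, in the norm) is the right tool and must be applied to $A_1^2$ rather than $A_1$. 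Everything else is routine bookkeeping with constants depending only on $R$, $l$, $\kappa$, $\beta$.
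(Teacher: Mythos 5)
Your proposal is correct and follows essentially the same route as the paper: bound the quadratic part of $J$ below by a fixed multiple of $\|(A_1,A_2)\|_{\mathbf{H}}^2$ (the paper gets the factor $\tfrac14$ after discarding the nonnegative $\kappa$-terms), control the cubic term by $C\|(A_1,A_2)\|_{\mathbf{H}}^3$ (the paper does this via Cauchy--Schwarz and the radial $L^4$ estimate already derived in Section~2, which is exactly the Ladyzhenskaya-type bound you invoke), and then take $K$ small so that the quadratic term dominates on the sphere. The only cosmetic difference is that the paper optimizes the resulting function $f(K)=\tfrac14 K-\sqrt{2}R^2K^{3/2}$ explicitly at $K=1/(72R^4)$, whereas you simply choose $K$ small enough, which is equally valid.
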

\begin{proof}
    Let $K>0$ be a constant and $(A_1,A_2)\in\textbf{H}$ such that $||(A_1,A_2)||^2_\textbf{H}=K.$ From \eqref{bndOnA^4}, we get
\begin{equation}
    \int_0^RrA^4dr\leq 2\left(\int_0^RrA^2dr\right)\left(\int_0^R rA_r^2dr\right)^{1/2}\left(\int_0^R\dfrac{1}{r} A^2dr\right)^{1/2}\leq 2R^2K^2.
\end{equation}
So, 
\begin{align}
    J(A_1,A_2)&\geq \dfrac{1}{4}||(A_1,A_2)||^2_\textbf{H}-\int_0^RrA_1^2A_2dr\\
    &\geq \dfrac{1}{4}||(A_1,A_2)||^2_\textbf{H}-\left(\int_0^RrA_1^4dr\right)^{1/2}\left(\int_0^RrA_2^2dr\right)^{1/2}\\
    &\geq\dfrac{1}{4}K-(2R^2K^2)^{1/2}(R^2K)^{1/2}\\
    &=\dfrac{1}{4}K-\sqrt{2}R^2K^{3/2}:=f(K).
\end{align}
Then $f\left(\dfrac{1}{72R^4}\right)=\dfrac{1}{864R^4}$ is the maximum value of $f$. Therefore, we have the lower bound
\begin{align}
    J(A_1,A_2)\geq \dfrac{1}{864R^4},  \quad ||(A_1,A_2)||^2_\textbf{H}=\dfrac{1}{72R^4}, 
\end{align}
as desired. 
$\qquad\square$
\end{proof}

\begin{lemma}
    For any $K>0$ there is an $(A_1,A_2)\in\textbf{H}$ such that 
        $||(A_1,A_2)||^2_\textbf{H}>K$ and $J(A_1,A_2)<0$. Moreover, $J$ is indefinite on $\textbf{H}$.
\end{lemma}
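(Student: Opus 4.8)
\noindent The plan is to exploit the structural form of $J$. Writing its quadratic part as
$Q(A_1,A_2):=\frac12\int_0^R\big(A_{1,r}^2+\frac12A_{2,r}^2+\frac{l^2}{r^2}A_1^2+\frac{2l^2}{r^2}A_2^2\big)r\,dr+\int_0^R\big(\kappa A_1^2+(2\kappa+\beta)A_2^2\big)r\,dr$,
one has $J(A_1,A_2)=Q(A_1,A_2)-2\int_0^RA_1^2A_2\,r\,dr$, and under the running hypotheses ($|l|\geq 1$, $\kappa>0$, and hence $2\kappa+\beta>0$) the form $Q$ is positive definite on $\textbf{H}$. Thus $J$ is ``positive quadratic minus cubic,'' and the statement reduces to the elementary fact that along a suitable ray a cubic term eventually dominates a quadratic one.

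First I would fix a single nontrivial, nonnegative test function $\phi\in X\subset H$ --- for instance $\phi(r)=r(R-r)$, which vanishes at both endpoints, has bounded derivative, and satisfies $\int_0^R\frac1r\phi^2\,dr<\infty$ so that $||\phi||<\infty$ --- and set $D:=\int_0^R\phi^3\,r\,dr>0$. Evaluating $J$ along the ray $t\mapsto(t\phi,t\phi)$, $t>0$, gives $J(t\phi,t\phi)=t^2Q(\phi,\phi)-2t^3D=t^2\big(Q(\phi,\phi)-2tD\big)$, hence $J(t\phi,t\phi)<0$ once $t>Q(\phi,\phi)/(2D)$, while $||(t\phi,t\phi)||_{\textbf{H}}^2=2t^2||\phi||^2$ exceeds any prescribed $K$ once $t>\sqrt{K/(2||\phi||^2)}$. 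Taking $t$ larger than both thresholds produces the pair $(A_1,A_2)=(t\phi,t\phi)$ with $||(A_1,A_2)||_{\textbf{H}}^2>K$ and $J(A_1,A_2)<0$, which is the first assertion.

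For indefiniteness I would combine this with Lemma 4.1: on the sphere $||(A_1,A_2)||_{\textbf{H}}^2=\frac1{72R^4}$ one has $J\geq\frac1{864R^4}>0$, so $J$ attains strictly positive values, whereas $J(t\phi,t\phi)\to-\infty$ as $t\to\infty$ shows $J$ is unbounded below. Running the same computation along $(t\phi,-t\phi)$ leaves $Q$ unchanged but reverses the sign of the cubic term, giving $J(t\phi,-t\phi)=t^2Q(\phi,\phi)+2t^3D\to+\infty$, so $J$ is also unbounded above; hence $J$ is indefinite and, together with the barrier of Lemma 4.1, exhibits the mountain-pass/saddle geometry needed for the next step. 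I do not anticipate a genuine obstacle here: the only points meriting a line of justification are that the chosen test functions actually lie in the weighted space $H$ (finiteness of the $\frac{l^2}{r^2}$-weighted integral at $r=0$) and that $\int_0^RA_1^2A_2\,r\,dr>0$, which holds for any nontrivial nonnegative choice.
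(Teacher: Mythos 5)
Your proposal is correct and follows essentially the same route as the paper: both evaluate $J$ along a ray of scaled nonnegative profiles $(t\phi,t\phi)$ and observe that the sign-definite cubic term $-2\int_0^R A_1^2A_2\,r\,dr$ eventually dominates the positive quadratic part, making $J$ negative while the norm grows without bound. The only cosmetic difference is your smooth test function $\phi(r)=r(R-r)$, which lies directly in $X$ and spares the explicit integral computations and the Cauchy-sequence membership argument the paper needs for its piecewise-linear tent function $A_0$.
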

\begin{proof}
Consider the following function
\begin{align}
	A_0(r)=\left\{
	\begin{array}{cc}
		\frac{b}{a}r, &0\leq r\leq a,\\
		\frac{b}{a}(2a-r), &a\leq r\leq 2a,
	\end{array}\right.\label{A0},
\end{align}
where $R=2a$. It can be shown as in \cite{YR} that the function $A_0$ is the limit of a Cauchy sequence in $H$ and, consequently, belongs in $H$.  By direct calculation we get
\begin{align}
    \int_0^{2a}rA_0^2dr&=\dfrac{2}{3}a^2b^2,\\
    \int_0^{2a}rA_{0,r}^2dr&=2b^2,\\
    \int_0^{2a}\dfrac{A_0^2}{r}dr&=2b^2(2\ln(2)-1),\\
    \int_0^{2a}rA_0^3dr&=\dfrac{1}{2}a^2b^3. 
\end{align}
It then follows that
    \begin{align}
        ||(A_0,A_0)||^2_\textbf{H}&=8b^2\ln(2),\\
        J(A_0,A_0)&=b^2\left[\dfrac{3}{2}+3l^2(2\ln(2)-1)+\dfrac{2}{3}(3\kappa+\beta)a^2-\dfrac{1}{2}a^2b\right].
    \end{align}
    Therefore, for any $K>0$, we can choose $b$ large enough so that $8b^2\ln(2)>K$ and $J(A_0,A_0)<0$. Additionally, $J(A_0,A_0)\rightarrow-\infty$ as $b\rightarrow\infty$ and, consequently, $J$ is indefinite.
    $\qquad\square$
\end{proof}

We now prove that $J$ satisfies the Palais-Smale condition. Note that it is straightforward to show that $J$ is $C^1(\textbf{H})$ and recall that a $C^1$-functional $J:\textbf{H}\rightarrow\mathbb{R}$ is said to be Palais-Smale if for any sequence $\left\{(A_{1,n},A_{2,n})\right\}_{n=1}^{\infty}\in\mathbf{H}$ such that  $\left\{J(A_{1,n},A_{2,n})\right\}_{n=1}^{\infty}$is bounded in $\mathbf{H}$ and $J'(A_{1,n},A_{2,n})\rightarrow 0$ as $n\rightarrow\infty$ (as a sequence in the dual of $\mathbf{H}$), implies the existence of a strongly convergent sub-sequence of $\left\{(A_{1,n},A_{2,n})\right\}_{n=1}^{\infty}$ converging to an element $(A_1,A_2)$ in $\mathbf{H}$ \cite{Jabri,Rabi}.
\begin{lemma}
    The action functional $J$ defined by \eqref{IndfAction} is Palais-Smale. 
\end{lemma}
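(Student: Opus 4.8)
The plan is to verify the Palais--Smale condition in the standard two-step fashion: first extract a bounded (sub)sequence in $\textbf{H}$, then upgrade weak convergence to strong convergence using the equation $J'(A_{1,n},A_{2,n})\to 0$ together with the compact Sobolev embedding already invoked in Section 2. So let $\{(A_{1,n},A_{2,n})\}$ satisfy $|J(A_{1,n},A_{2,n})|\le c$ and $J'(A_{1,n},A_{2,n})\to 0$ in $\textbf{H}^*$. The first task is boundedness. I would compute the combination $J(A_{1,n},A_{2,n})-\tfrac12\langle J'(A_{1,n},A_{2,n}),(A_{1,n},A_{2,n})\rangle$; the quadratic (norm-squared) part and the $\int (\kappa-A_2)A_1^2+(2\kappa+\beta)A_2^2$ part are homogeneous of degree $2$ and largely cancel, while the cubic term $\int r A_1^2 A_2\,dr$ does not, leaving something like $\tfrac12\int r A_1^2 A_2\, dr$ up to controlled terms; since $J'(A_{1,n},A_{2,n})\to 0$ this expression is $\le c + o(1)\|(A_{1,n},A_{2,n})\|_\textbf{H}$. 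Combining with the fact that $J$ itself is bounded, I can solve for $\|(A_{1,n},A_{2,n})\|_\textbf{H}$; the bound $\kappa>\max\{0,-\beta/2\}$ makes the coefficients of $\int r A_1^2\,dr$ and $\int r A_2^2\,dr$ have a favorable sign (so these low-order terms help rather than hurt), and one deduces $\|(A_{1,n},A_{2,n})\|_\textbf{H}\le C$.

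With boundedness in hand I pass to a subsequence $(A_{1,n},A_{2,n})\rightharpoonup(A_1,A_2)$ weakly in $\textbf{H}$, and by the compact embedding $W^{1,2}_0(D_R)\hookrightarrow L^p(D_R)$ (any $p\ge1$) used in Section~2 we get strong convergence $A_{m,n}\to A_m$ in $L^p$, in particular in $L^2$, $L^3$ and $L^4$ with the radial weight $r\,dr$. The key identity is $\langle J'(A_{1,n},A_{2,n}) - J'(A_{1,m},A_{2,m}),\, (A_{1,n}-A_{1,m},A_{2,n}-A_{2,m})\rangle \to 0$ as $n,m\to\infty$ (since $J'\to0$ and the test functions are bounded). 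Writing $J'$ out, its principal part is exactly the inner product of $\textbf{H}$, so this gives $\|(A_{1,n}-A_{1,m},A_{2,n}-A_{2,m})\|_\textbf{H}^2$ equal to a sum of integral remainder terms involving products like $(A_{1,n}A_{2,n}-A_{1,m}A_{2,m})(A_{1,n}-A_{1,m})$, $(A_{1,n}^2 A_{2,n}^{?})$ etc.; each such term is estimated by Hölder using the $L^p$-strong convergence and the uniform $\textbf{H}$-bound, hence tends to $0$. Therefore $\{(A_{1,n},A_{2,n})\}$ is Cauchy in $\textbf{H}$ and converges strongly; its limit is the required $(A_1,A_2)\in\textbf{H}$, and continuity of $J'$ shows $J'(A_1,A_2)=0$.

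The main obstacle is the boundedness step: unlike a Nehari-type functional for a pure power nonlinearity, here the nonlinearity $\int r A_1^2 A_2\,dr$ is only \emph{cubic} and sign-indefinite (since $A_2$ need not have a fixed sign a priori), so the naive $J-\tfrac12\langle J',\cdot\rangle$ trick controls only $\int r A_1^2 A_2\,dr$, not each of $\|A_1\|$, $\|A_2\|$ separately. I expect to need the interpolation inequality \eqref{bndOnA^4} — bounding $\int r A^4\,dr$ by $\|A\|_{L^2(r\,dr)}^2\,\|A_r\|_{L^2(r\,dr)}\,\|r^{-1/2}A\|_{L^2}$ — to absorb the cross term: estimate $|\int r A_1^2 A_2\,dr|\le (\int r A_1^4)^{1/2}(\int r A_2^2)^{1/2}$ and feed \eqref{bndOnA^4} in, so that the cubic term is bounded by (a small constant times) the full $\textbf{H}$-norm squared plus lower-order pieces, which can then be absorbed into the coercive quadratic part. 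Once this absorption is carried out carefully the bound $\|(A_{1,n},A_{2,n})\|_\textbf{H}\le C$ follows, and the rest of the argument is the routine compactness computation sketched above.
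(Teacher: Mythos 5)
Your proposal is correct and follows essentially the same route as the paper: the paper bounds the PS sequence via the combination $J-\tfrac13 J'(u)(u)$, which eliminates the cubic term and leaves $\tfrac16(\text{gradient part})+\tfrac13\int_0^R(\kappa A_1^2+(2\kappa+\beta)A_2^2)r\,dr\ge c\,\|u\|_{\mathbf{H}}^2$, while you isolate the cubic term with $J-\tfrac12 J'(u)(u)$ and substitute back into the bound on $J$ --- algebraically the same mechanism, resting on the same sign condition $\kappa>\max\{0,-\beta/2\}$ --- and your strong-convergence step (compact embedding plus H\"older on the cubic remainders) matches the paper's. The worry in your final paragraph is unfounded: control of $\int_0^R rA_1^2A_2\,dr$ alone already closes the boundedness step, so no absorption via \eqref{bndOnA^4} is needed (and such an absorption could not work anyway, since a cubic term cannot be dominated by a small multiple of the quadratic part without an a priori bound).
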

\begin{proof} 
    Let $\{(A_{1,n},A_{2,n})\}$ be a sequence in $\textbf{H}$ such that 
\begin{align}
    J(A_{1,n},A_{2,n})=&\dfrac{1}{2}\int_0^R\left([A_{1,n}]_r^2+\dfrac{1}{2}[A_{2,n}]_r^2+\dfrac{l^2}{r^2}A_{1,n}^2+\dfrac{2l^2}
    {r^2}A_{2,n}^2\right)rdr\label{PS1}\\
   &+ \int_0^R\left(\kappa A_{1,n}^2+(2\kappa+\beta)A_{2,n}^2-A_{1,n}^2A_{2,n}\right)rdr\rightarrow \alpha,\quad n\rightarrow\infty,\nonumber
\end{align}
and 
\begin{align}
    |J'(A_{1,n},A_{2,n})(\tilde{A}_1,\tilde{A}_2)|\leq \delta_n||(\tilde{A}_1,\tilde{A}_2)||_{\textbf{H}}, \quad\delta_n\geq 0, \quad (\tilde{A}_1,\tilde{A}_2)\in\textbf{H},\label{PS2}
\end{align}
where $\delta_n\rightarrow 0$ as $n\rightarrow\infty$. Take $(\tilde{A}_1,\tilde{A}_2)=(A_{1,n},A_{2,n})$ in \eqref{PS2} to arrive at
\begin{align}
    -(J'(A_{1,n},A_{2,n})(A_{1,n},A_{2,n}))\leq \delta_n||(A_{1,n},A_{2,n})||_{\textbf{H}}
\end{align}
or equivalently 
\begin{align}
    \int_0^R A_{1,n}^2A_{2,n}rdr&\leq\dfrac{1}{3}\int_0^R\left([A_{1,n}]_r^2+\dfrac{1}{2}[A_{2,n}]_r^2+\dfrac{l^2}{r^2}A_{1,n}^2+\dfrac{2l^2}
    {r^2}A_{2,n}^2\right)rdr\label{bnd1}\\
    &+\dfrac{2}{3}\int_0^R\left(\kappa A_{1,n}^2+(2\kappa+\beta)A_{2,n}^2\right)rdr+\dfrac{1}{3}\delta_n||(A_{1,n},A_{2,n})||_{\textbf{H}}\nonumber.
\end{align}
Without loss of generality from \eqref{PS1} we can assume $J(A_{1,n},A_{2,n})\leq \alpha+1$ for all $n=1,2,\ldots$. Using \eqref{bnd1}, we then get
\begin{align}
    \alpha+1\geq&\dfrac{1}{6}\int_0^R\left([A_{1,n}]_r^2+\dfrac{1}{2}[A_{2,n}]_r^2+\dfrac{l^2}{r^2}A_{1,n}^2+\dfrac{2l^2}
    {r^2}A_{2,n}^2\right)rdr\\
    &+\dfrac{1}{3}\int_0^R\left(\kappa A_{1,n}^2+(2\kappa+\beta)A_{2,n}^2\right)rdr-\dfrac{1}{3}\delta_n||(A_{1,n},A_{2,n})||_{\textbf{H}}\nonumber\\
    \geq&\dfrac{1}{12}\int_0^R\left([A_{1,n}]_r^2+[A_{2,n}]_r^2+\dfrac{l^2}{r^2}A_{1,n}^2+\dfrac{l^2}
    {r^2}A_{2,n}^2\right)rdr\\
    &-\dfrac{1}{3}\delta_n||(A_{1,n},A_{2,n})||_{\textbf{H}}\nonumber\\
    \geq&\dfrac{1}{48}||(A_{1,n},A_{2,n})||_{\textbf{H}}^2-\dfrac{4}{3}\delta_n,\quad n=1,2,\dots,
\end{align}
where in the last inequality we used Cauchy's inequality with $\epsilon$ \cite{Evans} and, consequently, we get that $\left\{(A_{1,n},A_{2,n})\right\}_{n=1}^{\infty}$is bounded in $\mathbf{H}$. Without loss generality, we may assume that $\left\{(A_{1,n},A_{2,n})\right\}_{n=1}^{\infty}$ converges weakly to an element $(A_1,A_2)\in\mathbf{H}$ as $n\rightarrow\infty$. Moreover, $(A_{1,n},A_{2,n})\rightarrow (A_1,A_2)$ as $n\rightarrow\infty$ strongly in any $L^p(D_R)\times L^p(D_R)$ or, equivalently, $L^p((0,R),rdr)\times L^p((0,R),rdr)$ for any $p\geq 1$. Letting $n\rightarrow\infty$ in \eqref{PS2}, we obtain
\begin{align}
    0=&\int_0^R\left([A_1]_r[\tilde{A}_1]_r+\dfrac{1}{2}[A_2]_r[\tilde{A}_2]_r+\dfrac{l^2}{r^2}A_1\tilde{A}_1+\dfrac{2l^2}{r^2}A_2\tilde{A}_2\right)rdr\label{PS2_mod1}\\
    &+\int_0^R\left(2\kappa A_1\tilde{A}_1+2(2\kappa+\beta)A_2\tilde{A}_2-2A_1\tilde{A}_1A_2-A_1^2\tilde{A}_2\right)rdr\nonumber\quad\forall (\tilde{A}_1,\tilde{A_2})\in\textbf{H}.
\end{align}
Let $(\tilde{A}_1,\tilde{A}_2)=(A_{1,n}-A_1,A_{2,n}-A_2)$ and insert into \eqref{PS2_mod1} and \eqref{PS2} and insert the resulting \eqref{PS2_mod1} into the resulting \eqref{PS2}, to arrive at 
\begin{align}
    &\bigg|\int_0^R\left(\left([A_{1,n}]_r-[A_1]_r\right)^2+\dfrac{l^2}{r^2}(A_{1,n}-A_1)^2\right)rdr\\
    &\int_0^R\left(\dfrac{1}{2}\left([A_{2,n}]_r-[A_2]_r\right)^2+\dfrac{2l^2}{r^2}(A_{2,n}-A_2)^2\right)rdr\nonumber\\
    &+\int_0^R\left(2\kappa(A_{1,n}-A_1)^2+2(2\kappa+\beta)(A_{2,n}-A_2)^2-2(A_{1,n}A_{2,n}-A_1A_2)(A_{1,n}-A_1)\right)rdr\nonumber\\
    &-\int_0^R\left((A_{1,n}^2-A_1^2)(A_{2,n}-A_2)\right)rdr\bigg|\leq\delta_n||(A_{1,n}-A_1,A_{2,n}-A_2)||_{\textbf{H}}\nonumber,
\end{align}
which then gives
\begin{align}
    &\dfrac{1}{2}||(A_{1,n}-A_1,A_{2,n}-A_2)||_{\textbf{H}}^2\leq\delta_n||(A_{1,n}-A_1,A_{2,n}-A_2)||_{\textbf{H}}\\
    &+\int_0^R\left|(A_{1,n}^2-A_1^2)(A_{2,n}-A_2)\right|rdr+2\int_0^R\left|(A_{1,n}A_{2,n}-A_1A_2)(A_{1,n}-A_1)\right|rdr\nonumber. 
\end{align}
Applying Cauchy's inequality with $\epsilon$ \cite{Evans}, we then get
\begin{align}
    &\dfrac{1}{4}||(A_{1,n}-A_1,A_{2,n}-A_2)||_{\textbf{H}}^2\leq\delta_n^2+\int_0^R\left|(A_{1,n}^2-A_1^2)(A_{2,n}-A_2)\right|rdr\\
    &+2\int_0^R\left|(A_{1,n}A_{2,n}-A_1A_2)(A_{1,n}-A_1)\right|rdr,\quad n=1,2,\ldots\nonumber, 
\end{align}
which then gives the strong convergence of $(A_{1,n},A_{2,n})\rightarrow (A_1,A_2)$ as $n\rightarrow\infty$ in $\textbf{H}$, as desired. 
$\qquad\square$
\end{proof}
With the above lemmas, we can now obtain our desire results as a consequence of the classical mountain-pass theorem \cite{Ambro,Evans,Jabri}. First, from Lemma 4.3, we have that the functional $J$, given by \eqref{IndfAction}, satisfies the Palais-Smale condition. Then by Lemma 4.1 and Lemma 4.2, there are constants $K,C_0>0$ and an element $(A_0,A_0)\in\textbf{H}$ such that $||(A_0,A_0)||_{\textbf{H}}^2>K$ and $J(A_0,A_0)<0$. Now consider the set 
    \begin{align}
        \Gamma=\{g\in\mathcal{C}([0,1];\textbf{H})|g(0)=(0,0),g(1)=(A_0,A_0)\}
    \end{align}
     of all continuous paths in $\textbf{H}$ that link the zero element $(0,0)$ of $\textbf{H}$ to $(A_0,A_0)$. Then there is some $t_g\in(0,1)$ such that $||g(t_g)||_{\textbf{H}}^2=K$. It then follows from the classical mountain-pass theorem that
     \begin{align}
         C=\inf\limits_{g\in\Gamma}\max\limits_{t\in[0,1]}J(g(t))\geq C_0
     \end{align}
     is a critical value of $J$, meaning, that there is an element $(A_1,A_2)\in\textbf{H}$ such that $J(A_1,A_2)=C$, which is a critical point of $J$. Moreover, such critical point is not the trivial solution, $(0,0)$, and by Lemma 3.1 it is also not a semi-trivial solution as desired.\\

\begin{acknowledgements}
We thank the anonymous referee for the careful reading of this manuscript and helpful feedback. This work does not have any conflicts of interest. There are no funders to report for this submission.  
\end{acknowledgements}

\end{document}